%%%% ijcai23.tex

\typeout{IJCAI--23 Instructions for Authors}

% These are the instructions for authors for IJCAI-23.

\documentclass{article}
\pdfpagewidth=8.5in
\pdfpageheight=11in

% The file ijcai23.sty is a copy from ijcai22.sty
% The file ijcai22.sty is NOT the same as previous years'
\usepackage{ijcai23}

% Use the postscript times font!
\usepackage{times}
\usepackage{soul}
\usepackage{url}
\usepackage[hidelinks]{hyperref}
\usepackage[utf8]{inputenc}
\usepackage[small]{caption}
\usepackage{graphicx}
\usepackage{amsmath}
\usepackage{amsthm}
\usepackage{amsfonts}
\usepackage{booktabs}
\usepackage{algorithm}
\usepackage[noend]{algorithmic}
\usepackage[switch]{lineno}
\usepackage{color}
\usepackage{multirow}
\usepackage{natbib}

% Comment out this line in the camera-ready submission
%\linenumbers

\urlstyle{same}

% the following package is optional:
%\usepackage{latexsym}

% See https://www.overleaf.com/learn/latex/theorems_and_proofs
% for a nice explanation of how to define new theorems, but keep
% in mind that the amsthm package is already included in this
% template and that you must *not* alter the styling.

%\newcommand{\bt}[1]{}

% Following comment is from ijcai97-submit.tex:
% The preparation of these files was supported by Schlumberger Palo Alto
% Research, AT\&T Bell Laboratories, and Morgan Kaufmann Publishers.
% Shirley Jowell, of Morgan Kaufmann Publishers, and Peter F.
% Patel-Schneider, of AT\&T Bell Laboratories collaborated on their
% preparation.

% These instructions can be modified and used in other conferences as long
% as credit to the authors and supporting agencies is retained, this notice
% is not changed, and further modification or reuse is not restricted.
% Neither Shirley Jowell nor Peter F. Patel-Schneider can be listed as
% contacts for providing assistance without their prior permission.

% To use for other conferences, change references to files and the
% conference appropriate and use other authors, contacts, publishers, and
% organizations.
% Also change the deadline and address for returning papers and the length and
% page charge instructions.
% Put where the files are available in the appropriate places.

% PDF Info Is REQUIRED.
% Please **do not** include Title and Author information
\pdfinfo{
/TemplateVersion (IJCAI.2023.0)
}

\title{Average Envy-freeness for Indivisible Items}

% Single author syntax
% \author{
%     Anonymous Author(s)
%     \affiliations
%     Anonymous Affiliation(s)
% }

% Multiple author syntax (remove the single-author syntax above and the \iffalse ... \fi here)

\author{
Qishen Han$^1$
\and
Biaoshuai Tao$^2$\and
Lirong Xia$^{1}$
\affiliations
$^1$Rensselaer Polytechnic Institute\\
$^2$Shanghai Jiao Tong University
\emails
hnickc2017@gmail.com,
bstao@sjtu.edu.cn,
xialirong@gmail.com
}

\begin{document}

\newtheorem{thm}{Theorem}
\newenvironment{thmbis}[1]
  {\renewcommand{\thethm}{\ref{#1}}%
   \addtocounter{thm}{-1}%
   \begin{thm}}
  {\end{thm}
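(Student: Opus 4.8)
The plan is to prove NP-completeness in the two standard parts: membership in NP, which is routine, and NP-hardness, which carries the content. For membership, an allocation $(A_1,\dots,A_n)$ is a polynomial-size certificate: for each agent $i$ one computes $v_i(A_i)$ and the average $\frac{1}{n-1}\sum_{j\ne i} v_i(A_j)$ and checks the defining inequality in polynomial time, and for an ``up to one good'' variant one also loops over the polynomially many single-good removals.

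For hardness I would reduce from \textsc{Partition} (weak NP-hardness), or from \textsc{3-Partition} if a strong NP-hardness bound is claimed. From a multiset $\{a_1,\dots,a_k\}$ of positive integers with $\sum_t a_t = 2S$, I would build an instance with one item of ``base value'' $a_t$ for each $t$, so that meeting everyone's average-envy constraint forces these items to be split into parts of equal total value $S$. The subtlety specific to average envy-freeness is that it aggregates the other bundles rather than comparing pairs one at a time, so a single lopsided bundle can be compensated by the others; to kill this slack I would add a bounded number of auxiliary agents together with high-value ``anchor'' items (and, if needed, equalising dummy items), calibrated so that the only allocations satisfying all of the averaged inequalities are exactly those whose numeric part realises a perfect partition. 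Correctness then has the usual two directions: (i) a balanced partition yields such an allocation, verified inequality by inequality; (ii) conversely, any feasible allocation must place the anchor items in an essentially forced pattern, which pins the numeric items into two value-$S$ halves. For an ``up to one/any good'' version I would additionally scale the $a_t$ up so that removing a single good cannot absorb the discrepancy of an unbalanced split.

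The step I expect to be the main obstacle is exactly this gadget calibration. Because averaging smooths out per-pair violations, the anchor items and/or the multiplicities of identical auxiliary agents have to be tuned precisely enough that an average-envy-free allocation still forces a clean combinatorial structure with no spurious solutions; this is where AEF genuinely departs from envy-freeness and where the argument is most fragile. Once the construction is correct, the two soundness directions and the NP-membership check are mechanical, and if the theorem is stated for a fixed constant number of agents the same template applies with all the structure folded into a constant number of valuations.
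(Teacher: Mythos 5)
There is a genuine gap, and it starts with the definition. You read ``average envy-freeness'' as comparing agent $i$'s own bundle against the arithmetic mean of the \emph{other agents'} bundles, i.e.\ $v_i(A_i)$ versus $\frac{1}{n-1}\sum_{j\ne i} v_i(A_j)$. That is not the notion in this paper: AEF is a \emph{pairwise} condition in which ``average'' means the per-item average \emph{within} a bundle, $u_i(A_h)=v_i(A_h)/|A_h|$, and the requirement is $u_i(A_i)\ge u_i(A_h)$ for every pair $i,h$. As a result, the NP-membership check you describe verifies the wrong inequality, and the ``main obstacle'' you identify --- that averaging over agents smooths out per-pair violations so a lopsided bundle can be compensated by others --- simply does not arise. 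Conversely, you never address the obstacle that actually does arise: because each bundle's value is normalized by its cardinality, two bundles with very different sums can still have equal averages if their sizes differ, so a naive Partition reduction with ``one item of base value $a_t$'' fails --- an agent could take few high-value items and be perfectly happy. Any correct reduction must force the bundle sizes to be equal, and your proposal contains no mechanism for that. Your fallback of adding auxiliary agents and anchor items is also unavailable here, since the theorem asserts hardness already for exactly two agents with identical valuations.

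The paper's construction handles exactly this point: for each $x_i$ it creates a \emph{pair} of items with values $(T^2k^2)^i$ and $(T^2k^2)^i+x_i$. The exponentially growing base values force, in any AEF allocation, that the top pair is split between the two agents, then that both bundles have exactly $k$ items, then that every pair is split; only after cardinalities are pinned to be equal does AEF reduce to equality of sums and hence to an equal-sum partition of $X$. Your template (Partition reduction, two soundness directions, scaling to defeat one-item removal) is the right genre, but without the pairing-plus-exponential-scaling device --- or some other device that controls $|A_1|$ and $|A_2|$ --- the backward direction of your reduction does not go through.
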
}
\newtheorem{dfn}{Definition}
\newenvironment{dfnbis}[1]
  {\renewcommand{\thedfn}{\ref{#1}}%
   \addtocounter{dfn}{-1}%
   \begin{dfn}}
  {\end{dfn}}
\newtheorem{conj}{Conjecture}
\newtheorem{lem}{Lemma}
\newenvironment{lembis}[1]
  {\renewcommand{\thelem}{\ref{#1}}%
   \addtocounter{lem}{-1}%
   \begin{lem}}
  {\end{lem}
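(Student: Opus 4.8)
The plan is to show that the output of the \emph{round-robin} procedure is always AEF1; since round-robin runs in polynomial time, this settles existence and the algorithmic claim simultaneously, so the whole task reduces to verifying the averaged envy inequality for every agent. Recall the procedure: fix the agents in an arbitrary order $1,\dots,n$, and let them take turns, each agent in turn picking a most-preferred still-unallocated item, until nothing remains.

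First I would invoke the well-known fact that round-robin produces an EF1 allocation, and, more usefully, extract from its standard proof the pairwise bounds I actually need. Let agent $i$ sit at position $p$ and agent $j$ at position $q$. If $q>p$, then in every round $i$ picks before $j$, so matching the two agents' $k$-th picks one by one gives $v_i(A_i)\ge v_i(A_j)$. If $q<p$, delete $j$'s very first pick $g_j$ and match $i$'s $k$-th pick with $j$'s $(k+1)$-th pick round by round, obtaining $v_i(A_i)\ge v_i(A_j\setminus\{g_j\})=v_i(A_j)-v_i(g_j)$. Hence for every agent $i$ there are items $g_j\in A_j$ with $v_i(A_i)\ge v_i(A_j)-v_i(g_j)$ for all $j\ne i$ (and when $i$ does not envy $j$ the removal is vacuous, since valuations are nonnegative).

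The core step is then pure aggregation: summing these $n-1$ inequalities over $j\ne i$ and dividing by $n-1$ yields
\[
v_i(A_i)\ \ge\ \frac{1}{n-1}\sum_{j\ne i}\bigl(v_i(A_j)-v_i(g_j)\bigr),
\]
which is exactly the AEF1 condition for agent $i$, witnessed by $(g_j)_{j\ne i}$; as $i$ was arbitrary, the allocation is AEF1. I do not anticipate a deep obstacle here; the one point that needs care is matching this derivation to the precise reading of ``up to one item.'' Under the per-bundle reading (at most one deleted item per other agent, inside the average) the argument above is complete. If instead a single item may be removed from the whole aggregate $\sum_{j\ne i}v_i(A_j)$, the summation no longer suffices, and one must fall back on the finer structure of round-robin---that a most valuable foreign good for $i$ is the first pick of the agent immediately following $i$ in the order---or design a dedicated algorithm; that case is where any residual work would lie.
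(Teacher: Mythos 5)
There is a genuine gap here, and it starts with the definition. In this paper $\aefone$ is a \emph{pairwise} condition on \emph{per-item averages}: for every pair $i,h$ there must exist a single item $\gd\in A_i\cup A_h$ with $\ut_i(A_i\setminus\{\gd\})\ge \ut_i(A_h\setminus\{\gd\})$, where $\ut_i(S)=\vt_i(S)/|S|$. No averaging over the other $n-1$ agents ever occurs. Your aggregation step, which sums the round-robin bounds over $j\ne i$ and divides by $n-1$ to get $\vt_i(A_i)\ge \frac{1}{n-1}\sum_{j\ne i}(\vt_i(A_j)-\vt_i(\gd_j))$, therefore proves a different statement (own value versus the mean of others' bundle values), not the $\aefone$ condition. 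Neither of the two readings of ``up to one item'' you hedge between is the one used here.

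Moreover, the route cannot be repaired by reinterpreting the conclusion: round-robin does \emph{not} produce an $\aefone$ allocation in this paper's sense. Take $n=2$ and items $a,b,c,d$ with $\vt_1=(100,1,0,0)$ and $\vt_2=(100,10,11,0)$. Round-robin gives $A_1=\{a,b\}$ and $A_2=\{c,d\}$, so $\ut_2(A_2)=5.5$ while $\ut_2(A_1)=55$, and no single removal helps: deleting $a$, $b$, $c$, or $d$ leaves agent $2$ comparing $5.5$ vs.\ $10$, $5.5$ vs.\ $100$, $0$ vs.\ $55$, and $11$ vs.\ $55$ respectively. The difficulty is exactly the dilution of a bundle's average by low-value items, which additive EF-1 arguments ignore. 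The paper's construction sidesteps this with a very different allocation: agents $1,\dots,n-1$ each pick one favorite remaining item and agent $n$ takes everything left (or, if $m\le n$, the first $m$ agents take one item each). Correctness then rests on two facts specific to averages: a singleton bundle becomes empty (average $0$) once its item is removed, and the per-item average of agent $n$'s large bundle is at most the value of its best item, which every earlier agent weakly prefers their own pick to. You would need to switch to this (or another) construction and argue directly about $\ut_i$.
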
}
\newtheorem{ex}{Example}
\newtheorem{Alg}{Algorithm}
\newtheorem{prob}{Problem}
\newtheorem{question}{Question}
\newtheorem{prop}{Proposition}
\newtheorem{coro}{Corollary}
\newenvironment{prof}{\noindent{\em Proof.}\rm }{\hfill $\Box$ }
\newenvironment{sketch}{\noindent{\em Proof sketch.}\rm }{\hfill $\Box$ }
\newtheorem{cond}{Condition}
\newtheorem{claim}{Claim}
\newtheorem{mes}{Message}
\newtheorem{view}{Viewpoint}
\newtheorem{calculation}{Calculation}
\newtheorem{obs}{Observation}
\newtheorem{RQ}{Research Question}
\newcounter{newct}
\newcommand{\rqu}{%
        \stepcounter{newct}%
        Research Question~\thenewct.}

\newcommand\qishen[1]{{\color{blue} \footnote{\color{blue}Qishen: #1}} }

\newcommand{\blue}[1]{\textcolor{blue}{#1}}

\newcommand{\gd}{g}
\newcommand{\vt}{v}
\newcommand{\ut}{u}
\newcommand{\aef}{\text{\rm AEF}}
\newcommand{\aefone}{\text{\rm AEF-1}}
\newcommand{\inst}{\mathcal{I}}
\newcommand{\mat}{H}
\newcommand{\nvec}{W}
\newcommand{\nvecset}{\mathcal{W}}
\newcommand{\matset}{\mathcal{H}}
\newcommand{\isvld}{Vld}
\newcommand{\prev}{Prev}
\newcommand{\rd}{r}
\newcommand{\ub}{a}
\newcommand{\rim}{removing matrix}
\newcommand{\Rim}{Removing matrix}
\newcommand{\rmat}{R}
\newcommand{\ritem}{M'}

\newcommand{\gds}{\gd^s}
\newcommand{\gdl}{\gd^l}

\newcommand{\aefoneplus}[1]{#1\text{-error }\aefone} % aef1-plus
\newcommand{\aefonemul}[1]{#1\text{-}\aefone} %aef1-multiple

\newcommand{\exst}{\text{\sc Existence}}
\newcommand{\aefexst}{\text{{\rm AEF}-\exst}}
\newcommand{\aefoneexst}{\text{{\rm AEF-1}-\exst}}
\newcommand{\aefqexst}{\text{{\rm AEF}-\exst{} with a quota}}
\newcommand{\aefoneqexst}{\text{{\rm AEF-1}-\exst{} with a quota}}
\newcommand{\partition}{\text{\sc Partition}}
\newcommand{\ecpartition}{\text{\sc Equal-cardinality Partition}}

\maketitle

\begin{abstract}
    In fair division applications, agents may have unequal entitlements reflecting their different contributions. Moreover, the contributions of agents may depend on the allocation itself. Previous fairness notions designed for agents with equal or pre-determined entitlement fail to characterize fairness in these collaborative allocation scenarios. 

    We propose a novel fairness notion of {\em average envy-freeness} (AEF), where the envy of agents is defined on the average value of items in the bundles. Average envy-freeness provides a reasonable comparison between agents based on the items they receive and reflects their entitlements. We study the complexity of finding AEF and its relaxation, {\em average envy-freeness up to one item} (AEF-1). While deciding if an AEF allocation exists is NP-complete, an AEF-1 allocation is guaranteed to exist and can be computed in polynomial time. We also study allocation with {\em quotas}, i.e. restrictions on the sizes of bundles. We prove that finding AEF-1 allocation satisfying a quota is NP-hard. Nevertheless, in the instances with a fixed number of agents, we propose polynomial-time algorithms to find AEF-1 allocation with a quota for binary valuation and approximated AEF-1 allocation with a quota for general valuation.
    % Envy-freeness is the most common fairness notion for allocating indivisible items. However, when there are cardinality constraints (quotas) on the bundles, envy-freeness fails to characterize fairness. We propose the notion of {\em average envy-freeness} (AEF) for allocation with quotas, where agents compare bundles by the average values. We study the complexity of finding $\aef$ and $\aefone$ allocation satisfying the quotas, and propose algorithms to find (approximate) $\aefone$ under domain restrictions. 
\end{abstract}

\section{Introduction}
{Fair division aims to allocate items to a group of agents that have different preferences for the items and achieve fairness among the agents.
It is a classical yet heating topic that has wide application in real-world scenarios including peer review~\citep{payan2022order}, cloud computing~\citep{wang2015multi}, and healthcare resource distribution~\citep{roadevin2021can}. In fair division application, it is often the case that agents are {\em asymmetric}, i.e. having unequal entitlements. For example, in a food bank allocation, a family with more population requires more food to feed themselves. Asymmetric agents characterize a wide range of scenarios where agents have different contributions in a collaboration or agents represent groups with different populations. 
%Among all the fairness notions, the most widely acknowledged criterion is {\em envy-freness} (EF)~\cite{foley1967resource}, which requires that no agents prefer the bundle of another agent to their bundles. 
%Its relaxation, {\em envy-free up to one item} (EF-1)~\cite{Lipton04:Approximately} allows agents to hypothetically remove one item when comparing the bundles.

Extending fairness notions to agents with different entitlements, \citet{Chakraborty2021Weighted} proposes {\em weighted envy-freeness} (WEF), where the entitlements of agents are characterized by {\em predetermined weights}, and envy is defined on the weighted margin. Its relaxation, weighted envy-freeness up to one item (WEF-1), is guaranteed to exist and has been applied to paper-reviewer matching mechanisms~\citep{payan2022order}. 
% Nevertheless, the weights in the WEF notion are pre-determined and independent from the allocation, while in many scenarios the entitlements of agents are related to the items they are allocated, especially in collaborations and negotiations. 
However, in many scenarios, especially the collaboration between nations, groups, and individuals on productive activities, the entitlement and the contribution of agents depend on the allocation of resources. 

\begin{ex}
    Multiple research labs decide to start an interdisciplinary research collaboration. They need to allocate the research resources (e.g. funding, computing resources, assistants)  and assign tasks to each lab.
    % They need to work out a plan on resources allocated to each lab  and the expected contribution of each lab. 
    The contribution (task) of a lab depends on the resource it gets, so both plans should be determined simultaneously. Different labs have expertise in different areas and have different preferences for resources. A computer science group would prefer more computing resources while a biological group would prefer more research assistants.
    Furthermore, labs want the resources allocated fairly based on their expected contributions. How should they fairly allocate the resources?
\end{ex}

There are two challenges in characterizing fairness in collaboration scenarios. Firstly, with heterogeneous backgrounds, it is likely agents have unequal entitlements due to their different contributions. Secondly, the allocation and the entitlements are decided simultaneously, and the entitlements of agents depend on the allocation. For example, we would expect a lab with more computing resources to make more contributions to the computational results. Under these challenges, neither EF nor WEF is able to characterize fairness in the collaboration scenarios. EF designed for equal entitlement cannot directly extend to different entitlements.  WEF requires a {\em predetermined} weight for each agent. When the entitlements change with the allocation, a fixed weight is unable to represent them.

Based on the challenges, the following question remains unanswered: {\bf what is a proper fairness notion for fair division under collaborations?}

\subsection{Average Envy-freeness}
We propose the notion of {\em average envy-freeness} ($\aef$), where envy between agents is defined on the average value of the items in the bundles. 
\begin{dfnbis}{dfn:aef}[Average envy-freeness ($\aef$)] An allocation $A$ is said to be average envy-free if for any pair of agents $i, h\in N$, $\frac{\vt_i(A_i)}{|A_i|}\ge \frac{\vt_i(A_h)}{|A_h|}$.
\end{dfnbis}
In average envy-freeness, the entitlement of each agent is the number of items they received. For example, in the paper review scenario, the contribution of a reviewer is evaluated by the number of papers they review. The reviewer with more papers has a larger contribution to the community and deserves a larger entitlement when allocating the papers.

%Average envy-freeness can be interpreted as ``any agent will not be happier if they are switched to the role of another agent''. Another interpretation is that the average score serves as a representative of the overall quality of the bundle, and an agent does not envy another agent if their bundle has a better quality. 

The major difference between WEF and AEF is the modeling of the entitlements. In WEF, the entitlement of each agent is is their weight, which is pre-determined and independent of the allocation. In AEF, on the other hand, the entitlement or contribution of an agent is reflected by the size of their bundle. In a collaboration scenario where the allocation and the entitlements are decided simultaneously, predetermined weights cannot characterize entitlements that depend on the allocation. On the other hand, when there is no large difference between entitlements brought by different items, the size of each bundle serves as a natural estimation of the entitlements of each agent.  

We believe that average envy-freeness is a proper fairness notion for the collaboration scenarios because it provides a reasonable comparison between agents based on the resources they received. In an AEF allocation, every agent believes that they get a fair share of resources based on their (expected) contribution to the resource.
% We believe that average envy-freeness is a  better fairness notion for allocation with quotas because it provides a reasonable comparison between bundles that are independent of the size of the bundles. In an AEF allocation, a small course will not envy a large course just because it has more TA in quantity regardless of the larger requirements of the large course. 

\subsection{Our contribution}

We study the existence and the complexity of average envy-freeness and average envy-freeness up to one item ($\aefone$). We also consider scenarios with {\em quotas}, i.e. restrictions on the size of the bundles. The quota reflects the requirements and capability of each agent in the allocation. For example, a research lab needs a minimum amount of funding to run experiments and can digest at most a maximum amount of funding. 

The summary of our result is shown in Table~\ref{tbl:result}. We first consider the existence of $\aef$ and $\aefone$ without quota. Deciding whether an $\aef$ allocation exists is NP-complete, while an $\aefone$ allocation always exists and can be computed in polynomial time. However, the problem becomes much more difficult when quotas are considered, as deciding whether there exists an $\aefone$ allocation satisfying the quota is NP-complete. Therefore, we consider the instances with a fixed number of agents. When the value of each item is either zero or one, we propose a polynomial-time dynamic programming algorithm to decide the existence of an $\aefone$ allocation satisfying the quota and find the allocation if it exists. On the other hand, deciding the existence of an $\aefone$ allocation satisfying the quota in general valuation is still NP-complete. For the general valuation case, we give an approximation algorithm that finds an $\aefonemul{(1 - \frac{4}{mn})}$ allocation, in which agents value their bundles at least $(1 - \frac{4}{mn})$ times of other agent's bundles on the average value after removing one item. 

\begin{table}[htbp]
\setlength\tabcolsep{5.5pt}
\begin{tabular}{@{}cccc@{}}
\toprule
\textbf{Criterion} & $n$                                & \textbf{Valuation} & \textbf{Result }       \\ \midrule
$\aef$ &
  \begin{tabular}[c]{@{}c@{}}$2$\\ General\end{tabular} &
  \begin{tabular}[c]{@{}c@{}}Identical\\ General\end{tabular} &
  NP-complete \\ \midrule
$\aefone$ & General                            & General   & Always Exists \\ \midrule
\multirow{5}{*}{\begin{tabular}[c]{@{}c@{}}$\aefone$ \\ with quota\end{tabular}} &
  General &
  \begin{tabular}[c]{@{}c@{}}Binary\\ General\end{tabular} &
  NP-complete \\ \cmidrule(l){2-4} 
          & \multirow{3}{*}{Constant, $\ge 3$} & Binary    & In P          \\ \cmidrule(l){3-4} 
 &
   &
  General &
  \begin{tabular}[c]{@{}c@{}}NP-complete\\ $(1-\frac{4}{mn})$-approx\end{tabular} \\ \bottomrule
\end{tabular}
\caption{Result Summary of $\aef$ and $\aefone$. $n$ and $m$ are the number of agents and the number of items respectively.\label{tbl:result}}
\end{table}

\section{Related Works}

There is a large literature on the fair division problem with asymmetric agents, i.e. individuals or groups with heterogeneous entitlements.
% including envy-freeness~\cite{Chakraborty2021Weighted,Chakraborty2022Revisited,payan2022order,Chakraborty2021Picking,Montanari2022Submodular}, MMS~\cite{farhadi2019fair,Babaioff2021Fair,Aziz2020Proportional}, and proportionality~\cite{Aziz2020Proportional,Li2022Almost}. 
\citet{Chakraborty2021Weighted} proposes the notion of {\em weighted envy-freeness up to one item} (WEF-1), where predetermined weights of each agent represent their entitlements, and shows that a WEF-1 allocation always exists. \citet{payan2022order} and ~\citet{Chakraborty2021Picking} focus on maximizing social welfare of WEF-1 allocations by select the correct picking sequence in a Round-Robin-like mechanism.  \citet{Chakraborty2022Revisited} proposes a parameterized family of weighted envy-freeness, where agents with large or small weights are favored by setting different parameters. 
Other fairness notions on asymmetric agents include weighted MMS~\citep{farhadi2019fair,Babaioff2021Fair,Aziz2020Proportional}, weighted proportionality~\citep{Aziz2020Proportional,Li2022Almost}, and maximizing weighted Nash social welfare~\citep{Suksompong2022Nash}.

Another related line of work is fair division with {\em cardinality constraints}. \citet{Biswas18Cardinality} considers a scenario where items are categorized into multiple groups, and each group has a capacity of contributing to each agent. They show that an EF-1 allocation and a constant-factor approximation of MMS allocation always exist in such a scenario. They also extend this type of constraint to be represented by a matroid. 
\citet{Biswas2019Matroid} designs a polynomial time algorithm to find an EF-1 allocation satisfying the matroid constraint. \citet{Dror2021Heterogeneous} studies scenarios where agents have heterogeneous matroid constraints, and provides an algorithm to find EF-1 allocations in certain circumstances. \citet{gan2021approximately} and~\citet{Babaioff2019Income} study fair division problem where agents have budget constraints on the items. \citet{aziz2019constrained} propose a mechanism that turns a welfare-efficient allocation into a fair allocation while preserving the efficiency constraint.

}

\section{Preliminaries}
{\paragraph{Problem Instance} An instance of fair division problem $\inst= \langle N, M, V \rangle$ is defined by a set of $n$ agents $N = [n]$, a set of $m$ items $M$, and a valuation profile $V = \{\vt_1, \vt_2,\cdots, \vt_n\}$. We use $i$ to denote a generic agent in $N$ and $\gd$ to denote a generic item in $M$.

\paragraph{Valuation and Average Value}
Valuation functions represent the preferences of agents among the items. For each agent $i$, $\vt_i$ a mapping to a subset of $M$ to a non-negative value $\vt_i: 2^{M}\to \mathbb{R}_{\ge 0}$. We follow the convention to assume {\em additive valuation}, i.e for any $i\in N$ and $M'\subseteq M$, $\vt_i(M') = \sum_{\gd\in M'} \vt_i(\gd)$. 
We say that an instance has {\em binary} valuation if for any $i\in N$ and $\gd\in M$, $\vt_i(\gd) \in\{0, 1\}$, and an instance has {\em identical} valuation if $\vt_1 = \vt_2 = \cdots = \vt_n$. 
We also define the {\em average value} of a subset of item $M'$ as $\ut_i(M') = \frac{\vt_i(M')}{|M'|}$, i.e. the average value of the items in the subset. Specifically, $\ut_i(\emptyset) = 0$.  

%In this paper, we consider {\em average valuation}, i.e. the value of a subset of items is the average value of each item in the subset. 
%For any $i\in N$  and any $M' \subseteq M$, $\vt_i(M') = \frac{\sum_{\gd\in M'} \vt_i(\gd)}{|M'|}$. Specifically, $\vt_i(\emptyset) = 0$ for any $i$. 

\paragraph{Allocation} An allocation $A = (A_1, A_2,\cdots, A_n)$ is a $n$-partition of the set of items $M$, where $A_i\subseteq M$ is the {\em bundle} allocated to agent $i$.  We sometimes abuse the notation and use $A$ to denote a partial allocation, where there are items unallocated to any agent.

\paragraph{Quota} A quota $Q$ is a constraint on allocations. For each agent $i$, $Q$ imposes an upper bound and a lower bound for the size of the bundle $A_i$. An allocation $A$ satisfies a quota $Q$ if all agents satisfy the constraint. A quota is said to be {\em exact} if the upper bound equals to the lower bound for every agent. An exact quota regulates the exact number of items in each bundle. 
% $Q = \{(\underline{q}_1, \bar{q}_1), (\underline{q}_2, \bar{q}_2),\cdots, (\underline{q}_n, \bar{q}_n)\}$ is a vector such that $\underline{q}_i \le \bar{q}_i$ for every $i=1,2,\cdots, n$. An allocation $A$ satisfies a quota $Q$ if and only if $\bar{q}_i \ge |A_i| \ge \underline{q}_i$ for every $i$. A quota is said to be {\em exact} if $\underline{q}_i = \bar{q}_i = q_i$ for every $i$ An allocation $A$ satisfies an exact quota $Q$  if and only if $|A_i| = q_i$ for any $i$. 

\begin{dfn}[Average envy-freeness ($\aef$)] 
\label{dfn:aef}An allocation $A$ is said to be average envy-free if for any pair of agents $i, h\in N$, $\ut_i(A_i)\ge \ut_i(A_h)$.
    
\end{dfn}

% \begin{dfn}[Envy-freeness~\cite{foley1967resource}]
%     An allocation $A$ is said to be {\em envy-free} ({\rm EF}) if for any pair of agents $i, h\in N$, $\vt_i(A_i)\ge \vt_i(A_h)$. 
% \end{dfn}

% In the rest of the paper, we use abbreviation $\aef$ to denote the ``envy-freeness under average valuation'' to distinguish from the common envy-free under additive valuation. 

\begin{dfn}[Average envy-freeness up to one item ($\aefone$)]
    An allocation $A$ is said to be average envy-free up to one item if for any pair of agents $i, h\in N$, there exists an item $\gd\in A_i\cup A_h$ such that $\ut_i(A_i\setminus \{\gd\})\ge \ut_i(A_h\setminus \{\gd\})$.
\end{dfn}

By the definition of $\aefone$, agents can remove an item from either bundle under comparison. This is because agents can increase the average value of their own bundle by removing the least preferred item (if it's not the only item). It follows from the definition that an $\aef$ allocation is always $\aefone$.  

We introduce the computational problems related to $\aef{}$ and $\aefone{}$. 

\begin{dfn}[$\aefexst$]
    Given an instance $\inst$, does there exist an allocation $A$ such that $A$ is an $\aef$ allocation?
\end{dfn}

\begin{dfn}[\aefoneexst]
    Given an instance $\inst$, does there exist an allocation $A$ such that $A$ is an $\aefone$ allocation?
\end{dfn}

\begin{dfn}[$\aefqexst$]
    Given an instance $\inst$ and a quota $Q$, does there exist an allocation $A$ such that $A$ is an $\aef$ allocation and satisfies $Q$?
\end{dfn}

\begin{dfn}[$\aefoneqexst$]
    Given an instance $\inst$ and a quota $Q$, does there exist an allocation $A$ such that $A$ is an $\aefone$ allocation and satisfies $Q$?
\end{dfn}}

%\section{Theoretical Results}
{\section{Average Envy-freeness without quotas}

This section focuses on finding $\aef$ and $\aefone$ allocations without a quota constraint. We show that deciding the existence of an $\aef$ allocation is NP-complete, while an $\aefone$ allocation always exists and can be found in polynomial time. 

\begin{thm}
\label{thm:aef}
    $\aefexst$ is NP-complete even for two agents with identical valuations. 
\end{thm}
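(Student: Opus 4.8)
The plan is to verify NP-membership in the obvious way --- a proposed allocation can be checked against the two inequalities defining $\aef$ in polynomial time --- and to prove hardness by a reduction from \ecpartition. The first thing I would record is a normal form for the two-identical-agents case: if $\vt:=\vt_1=\vt_2$, then an allocation $(A_1,A_2)$ is $\aef$ if and only if $\ut(A_1)=\ut(A_2)$ (the two defining inequalities point in opposite directions), i.e.\ if and only if $M$ is split into two bundles of the same average value. Equivalently, an $\aef$ allocation exists iff there is a proper nonempty $S\subsetneq M$ whose average value $\vt(S)/|S|$ equals the global average $\vt(M)/|M|$. So the task reduces to: decide whether some proper nonempty subset hits the global average exactly.

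Given an \ecpartition instance $a_1,\dots,a_{2n}$ (positive integers) with $\sum_j a_j = 2T$, I would construct a fair-division instance with two identical agents, $2n$ ``ordinary'' items of values $a_1,\dots,a_{2n}$, and two ``heavy'' items of value $L$ each, where $L$ is a large integer polynomial in the input, say $L > 2T(2n+2)$. For the easy direction, a balanced half-partition of the $a_j$'s with each side summing to $T$ gives bundles $A_1,A_2$ of size $n+1$ and value $T+L$ each, so $\ut(A_1)=\ut(A_2)$ and the allocation is $\aef$.

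For the converse, suppose $(A_1,A_2)$ is $\aef$, so $\ut(A_1)=\ut(A_2)=\mu$ with $\mu=\frac{T+L}{n+1}$. Using that $L$ is huge: (i) neither bundle is empty (an empty bundle has average $0<\mu$); and (ii) the two heavy items lie in different bundles, since if both sat in one bundle its average would be at least $\frac{2L}{2n+2}=\frac{L}{n+1}>2T\ge$ the other bundle's average. Writing $A_1=\{\text{heavy}\}\cup S_1$ with $S_1$ a set of ordinary items, $k=|S_1|$, $s=\vt(S_1)$, the equation $\frac{L+s}{k+1}=\frac{L+2T-s}{2n-k+1}$ simplifies (after cross-multiplying and cancelling) to $L(n-k)=T(k+1)-s(n+1)$. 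Since $0\le s\le 2T$ and $0\le k\le 2n$, the right-hand side has absolute value at most $2T(n+1)<L$, so $L(n-k)=0$ forces $k=n$, and then $s=T$; thus $S_1$ is an $n$-element subset of the $a_j$'s summing to $T$, witnessing a yes-instance of \ecpartition.

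I expect the only real work to be the arithmetic bookkeeping: fixing the precise threshold on $L$ (so it stays polynomial while dominating $|T(k+1)-s(n+1)|$) and cleanly dispatching the degenerate cases (empty bundle, both heavy items on one side). The conceptual core is just two observations --- that $\aef$ for identical agents is an ``equal average'' condition, and that two identical heavy items simultaneously force one heavy item per bundle and the equal-cardinality constraint on the ordinary items.
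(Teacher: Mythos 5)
Your proof is correct, but it takes a genuinely different route from the paper's. The paper reduces from plain \partition{} and builds a gadget of $2k$ paired items $\gds_i,\gdl_i$ with base values $(T^2k^2)^i$ growing geometrically; the huge separation between levels forces each agent to take exactly one item from each pair, which simultaneously enforces equal bundle sizes and collapses the equal-average condition to the equal-sum condition on the $x_i$'s. You instead reduce from \ecpartition{}, keep the input numbers as item values, and add only two identical heavy items of value $L$; the single linear identity $L(n-k)=T(k+1)-s(n+1)$ with $L$ dominating the right-hand side then forces both the cardinality balance $k=n$ and the sum balance $s=T$ in one stroke. You also make explicit the normal form that the paper only uses implicitly: for two identical agents, $\aef$ is exactly the condition $\ut(A_1)=\ut(A_2)$, i.e.\ some proper nonempty subset attains the global average. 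Your reduction is lighter and the arithmetic is more transparent (one inequality $L>2T(2n+2)$ replaces the paper's four-step case analysis over the exponential scales); the paper's pair-gadget has the advantage of starting from the more standard \partition{} problem and of being the template reused in the reduction for Theorem~\ref{thm:aefoneq}. Both arguments are complete; just make sure, as you note, to state the NP-completeness of \ecpartition{} (which the paper also invokes) and to fix $L$ as a polynomial-bit-length integer.
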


\begin{proof}[Proof Sketch]
We construct a reduction from $\partition$, which is known to be NP-complete~\citep{Garey79:Computers}. An instance of $\partition$ consists of a multiset $X = \{x_1, x_2,\cdots, x_{k}\}$ where $x_i\in \mathbb{N}$. The goal is to determine whether $X$ can be partitioned into two subsets $Y$ and $X\setminus Y$ with equal sum $T$.

Given an instance of $\partition$ $X$, we construct an $\aefexst$ instance with two agents and $m =2k$ items. Two agents share the same value function $\vt$, and $\ut$ is the average valuation function. For each $x_i$, there exists two items $\gds_i $ and $ \gdl_i$ such that $\vt(\gds_i) = (T^2k^2)^i$, and $\vt(\gdl_i) = (T^2k^2)^i + x_i$. 

$(\Rightarrow)$ Given $Y$ and $X\setminus Y$ being an equal-sum partition of $X$, we construct an allocation $A$. For each $i$, $A_1$ gets $\gdl_i$ if $x_i \in Y$ or $\gds_i$ if $x_i\in X\setminus Y$. $A_2$ gets the rest of the items. It is not hard to verify that both agents get $k$ items, and $\vt(A_1) = \vt(A_2)$. Therefore, $\ut(A_1) = \ut(A_2)$, and $A$ is an $\aef$ allocation. 

$(\Leftarrow)$ If there exists an $\aef$ allocation $A$, we show that $A$ induces an equal-sum partition of $X$ in four steps. 

First, each agent gets exactly one of the largest items $\gds_k$ and $\gdl_k$. Otherwise, the exponential term $(T^2k^2)^i$ guarantees that the agent without the two items envies the other agent. 

Second, each agent gets exactly $k$ items. $\gds_k$ and $\gdl_k$ guarantee that the additive value of bundles is at the same level, and the agent with more items envies the agent with fewer items. 

Third, for each $i = 1,2,\cdots, k$, each agent get exactly one of $\gds_i$ and $\gdl_i$, following the similar reasoning of the first step. 

Finally, the allocation induces an equal-sum partition of $X$. Let $Y = \{x_i | \gdl_i\in A_1\}$, and $Y$ and $X\setminus Y$ is an equal-sum partition of $X$.  Given that each agent get exactly one of $\gds_i$ and $\gdl_i$, the difference between two bundles just come from $x_i$ from each $i$. Therefore, the sum of $x_i$ in $A_1$ must equal to the sum of $x_i$ in $A_2$, which implies $Y$ and $X\setminus Y$ be an equal-sum partition of $X$. The full proof is in Appendix~\ref{apx:aef}.
% If there exists an $\aef$ allocation $A$, we show that $A$ induces an equal-sum partition of $X$. First, each agent gets exactly one of the largest items $\gds_k$ and $\gdl_k$. Otherwise, the exponentially increasing factor $T^2k^2$ guarantees that the agent without the two items envies the other agent. Second, each agent gets exactly $k$ items. $\gds_k$ and $\gdl_k$ guarantees that the additive value of two bundles in a same level, and the agent with more items will envy the agent with less item. Thirdly, for each $i = 1,2,\cdots, k$, each agent get exactly one of $\gds_i$ and $\gdl_i$, which follows the similar reasoning of the first step. Finally, the allocation induces equal-sum partition of $X$. Let $Y = \{x_i | \gdl_i\in A_1\}$. Given that each agent get exactly one of $\gds_i$ and $\gdl_i$, the difference between two bundles just come from $x_i$ from each $i$. 
\end{proof}

The fact that $\aefexst$ is already NP-complete without quota directly implies that $\aefexst$ with quotas is NP-complete. 

\begin{coro}
    $\aefqexst$ is NP-complete even for two agents with identical valuations. 
    
\end{coro}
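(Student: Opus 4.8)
This is an immediate corollary of Theorem~\ref{thm:aef}, obtained by observing that $\aefexst$ is the special case of $\aefqexst$ in which the quota is vacuous. The plan has two parts: membership in NP and NP-hardness.

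For membership, given an instance $\inst$, a quota $Q$, and a candidate allocation $A$, one checks in polynomial time that $A$ is a genuine $n$-partition of $M$, that $|A_i|$ lies within the bounds prescribed by $Q$ for every $i$, and that $\ut_i(A_i)\ge\ut_i(A_h)$ for every pair $i,h$ (a finite set of comparisons of rational numbers, each computable from the additive valuations). Hence $A$ is a polynomial-size certificate and $\aefqexst\in\mathrm{NP}$.

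For hardness, reduce from $\aefexst$ restricted to two agents with identical valuations, which is NP-complete by Theorem~\ref{thm:aef}. Given such an instance $\inst=\langle N,M,V\rangle$ with $|N|=2$, $|M|=m$, and $\vt_1=\vt_2$, output the pair $(\inst,Q)$ where $Q$ assigns to each agent the lower bound $0$ and the upper bound $m$. This is a valid quota, the transformation is clearly polynomial, and the constructed instance still has two agents with identical valuations. Every allocation of $M$ trivially satisfies $Q$ (each bundle has between $0$ and $m$ items), so the set of allocations satisfying $Q$ coincides with the set of all allocations of $\inst$; consequently $\inst$ admits an $\aef$ allocation if and only if $(\inst,Q)$ admits an $\aef$ allocation satisfying $Q$. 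Correctness of the reduction is therefore immediate, and NP-hardness of $\aefqexst$ for two agents with identical valuations follows.

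There is no real obstacle here: the only point worth noting is that the degenerate bundle sizes permitted by $Q$ (including empty bundles, for which $\ut_i(\emptyset)=0$ by convention) are exactly those already allowed in $\aefexst$, so the equivalence between the two problems on the constructed instances is exact rather than merely approximate.
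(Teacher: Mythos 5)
Your proof is correct and is exactly the argument the paper intends: the authors simply remark that NP-completeness of $\aefexst$ ``directly implies'' the result, and your vacuous-quota reduction (lower bound $0$, upper bound $m$) together with the routine NP-membership check is the standard way to make that implication explicit. Nothing further is needed.
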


Despite that $\aef$ allocation is hard to find, we show that $\aefone$ allocation always exists just like EF-1. 

\begin{prop}
    For any instance $\inst$, an $\aefone$  allocation always exists and can be found in polynomial time. 
\end{prop}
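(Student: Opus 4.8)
The plan is to prove this constructively, and the first thing to note is that the obvious candidate — a round-robin picking sequence, which certifies the analogous EF-1 guarantee — does \emph{not} work for $\aefone$: because averaged envy is sensitive to bundle size, an agent left with a below-average bundle cannot always repair the comparison by deleting a single item (deleting an item shrinks a denominator as well), and one can exhibit small two-agent instances where every round-robin outcome violates $\aefone$. Instead I would use a deliberately lopsided allocation. Fix any one agent, say agent $n$; let agents $1,2,\dots,n-1$, in this order, each take a most-valued item among those not yet taken (and nothing once the items run out); finally assign every remaining item to agent $n$. This runs in $O(nm)$ time, so the complexity claim is immediate, and everything reduces to verifying $\aefone$.

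The verification rests on one structural fact, which I would establish first: for any $i\le n-1$ whose bundle is a singleton $A_i=\{\gd_i\}$, every item that ends up in $A_n$ was still available when agent $i$ chose $\gd_i$, hence is worth at most $\vt_i(\gd_i)=\ut_i(A_i)$ to agent $i$. Granting this, $\aefone$ follows from a short case split on an ordered pair $(i,h)$. If $|A_h|\le 1$, then deleting the unique item of $A_h$ (or having $A_h=\emptyset$ already) yields $\ut_i(A_h\setminus\{\gd\})=\ut_i(\emptyset)=0\le\ut_i(A_i)$, so the pair is satisfied; since $A_1,\dots,A_{n-1}$ are all singletons or empty, the only remaining case is $h=n$ with $|A_n|\ge 2$, which forces $A_i=\{\gd_i\}$ to be nonempty (otherwise all items would already be gone and $A_n$ would be empty too). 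There I would delete from $A_n$ the item $i$ values most: $A_i$ is unaffected, so its average stays $\vt_i(\gd_i)$, while the remaining items of $A_n$ are each worth at most $\vt_i(\gd_i)$ to $i$, so the average of $A_n$ after this deletion is at most $\vt_i(\gd_i)$; hence $\ut_i$ of the trimmed $A_i$ is at least $\ut_i$ of the trimmed $A_n$, as required.

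The main obstacle here is conceptual rather than computational: recognizing that one should \emph{not} try to equalize bundle sizes. Indeed, there are instances with identical binary valuations in which no allocation into equal-size bundles is $\aefone$, so any balanced approach is doomed in general. The lopsided construction sidesteps this because a singleton bundle is never $\aefone$-envied, and the owner of a single best-available item automatically tolerates the leftover ``dump'' bundle; once that structural choice is made, the calculations above are routine.
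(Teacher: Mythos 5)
Your proposal is correct and uses essentially the same construction as the paper: let agents $1,\dots,n-1$ each pick one most-valued remaining item and dump all leftovers on agent $n$, then observe that singleton/empty bundles are never envied after one removal and that every item in agent $n$'s bundle is worth at most the single item each earlier agent picked. Your write-up merely unifies the paper's two cases ($m\le n$ and $m>n$) into one description; the verification is the same.
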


\begin{proof}
    Consider the following allocation scheme:
    \begin{itemize}
        \item If $m\le n$, agents $1,2,\cdots, m$ get their favorite item among the unallocated items in turns. The rest agents get nothing. 
        \item If $m > n$, $1,2,\cdots, n-1$ get their favorite item among the unallocated items in turns, and agent $n$ gets the rest of the items. 
    \end{itemize}
    We show that allocation induced by this scheme is $\aefone$. Consider two agents $i$ and $h$. We show that $i$ does not envy $h$ up to one item.  
    \paragraph{$m\le n$.} If $i\le m$, then $i$ does not envy any $h > i$ because $h$ gets either no item or an item inferior to $i$'s item under $i$'s valuation. If $h < i$, then $i$ and does not envy $h$ by removing $h$'s only item. If $i > m$, then $i$ does not envy any other $h$ after removing $h$'s item (if exists). 

    \paragraph{$m > n$.} If $h\neq n$, $i$ does not envy $h$ after removing $h$'s only item. If $h = n$, note that $i$ picks their favorite item among all the rest of the items, including all $h$'s items. Therefore, for any $\gd\in A_{h}$, $\vt_{i}(A_i) \ge \vt_{i}(\gd)$. Therefore, $\ut_{i}(A_{i}) \ge \ut_{i}(A_{h})$, and $i$ does not envy $h$. 
\end{proof}

\section{AEF-1 with a quota}
In this section, we focus on the complexity of $\aefone$-{\sc Existence} with quota. Although an AEF-1 allocation always exists, it is not likely in real-world applications that all but one agent get exactly one item, and the rest agent gets all the rest items. Quotas restrict the size of each bundle in reasonable ranges and lead to allocations reflecting real-world scenarios. Unfortunately, our first result shows that $\aefone$-{\sc Existence} with a quota is NP-complete even for binary valuations. 
% We show that in general $\aefone$-{\sc Existence} with a quota is NP-complete. On the other hand, if the number of agents $n$ is fixed, and the valuation $V$ is binary, a dynamic programming algorithm can decide $\aefoneexst$ with quota in polynomial time and find an $\aefone$ allocation satisfying the quota if exists. Unfortunately, the algorithm cannot be extended to general valuation, and $\aefoneexst$ with a quota for fixed $n$ and general valuation is still NP-complete for all $n\ge 3$. 

\begin{thm}
    $\aefoneqexst$ is NP-com-plete even for binary valuations. 
\end{thm}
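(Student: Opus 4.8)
\emph{Membership in NP} is immediate: an allocation $A$ is a polynomial-size certificate, and one can check in polynomial time that $A$ respects the quota $Q$ and is $\aefone$. Indeed, for each ordered pair $(i,h)$ it suffices to test the two ``best'' deletions, namely removing a least-$\vt_i$-valued item of $A_i$ or a most-$\vt_i$-valued item of $A_h$, so only $O(n^2)$ average-comparisons are needed.

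\emph{NP-hardness.} I would reduce from an NP-complete problem that, unlike \partition{}, comes with an \emph{unbounded} number of ``parts'': the paper later shows that with a constant number of agents the binary case is in P, so any hardness reduction must create $\mathrm{poly}$-many agents. A natural source is \textsc{3-Partition} (strongly NP-complete): given $3m$ integers $a_1,\dots,a_{3m}$ with $B/4<a_j<B/2$ and $\sum_j a_j=mB$, decide whether they split into $m$ triples of sum $B$. Strong NP-hardness lets us take $B=\mathrm{poly}(m)$, so number $a_j$ can be represented \emph{in unary} by a block $I_j$ of $a_j$ unit items --- crucially, every valuation in the construction then only takes values $0$ and $1$. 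The reduction uses $m$ ``bin'' agents $R_1,\dots,R_m$ with the \emph{exact} quota $B$, together with one ``checker'' agent per block (and a pool of filler items, valued $0$ by everyone, to pad bundles to the prescribed sizes). The intended correspondence is: a solution triple assigned to $R_r$ becomes the union of its three blocks placed in $R_r$'s bundle. The $0/1$ valuations and quotas of the checkers are chosen so that an allocation can be $\aefone$ only if (i) no block $I_j$ is split across two or more bins and (ii) each bin's bundle consists of whole blocks; given $a_j\in(B/4,B/2)$, (i)--(ii) plus the exact quota $B$ force each bin to hold exactly three blocks summing to $B$, i.e.\ a valid $3$-partition. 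The converse direction is the easy one: from a valid $3$-partition, assign blocks to bins as above and filler to the checkers so that every agent's own bundle has the maximum possible average value, and verify directly that this allocation is $\aefone$.

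\emph{The main obstacle} is designing the checker gadget, and this is where the exact quotas (and, with them, the precise item/agent counts needed to keep the instance balanced) do all the work. $\aefone$ is permissive --- agent $i$ may delete one item from \emph{either} $A_i$ or $A_h$ --- so a singleton bundle can always be emptied and its holder is never envied, and a ``one item'' discrepancy is always tolerated. Consequently, every agent that is meant to detect a violation must be given a bundle of size at least two, and a split of some block must be amplified into a gap of \emph{at least two items' worth} in a relevant average (the $\frac{k}{k-1}$-type slack that governs $\aefone$ for binary valuations). Engineering the checkers' valuations and quotas so that every ``mixed'' split triggers such a gap while every ``clean'' allocation remains $\aefone$ --- and simultaneously keeping the totals of items and of quota demands consistent --- is the crux; once it is in place, both directions of the reduction close by a routine case analysis on bundle sizes and on how many items of each block each agent receives.
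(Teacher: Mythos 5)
Your NP-membership argument is fine, and your structural observation that any hardness reduction must use an unbounded number of agents (because the paper's later result puts binary valuations with constantly many agents in P) is correct and shows the right instinct. But the hardness half of your proposal is not a proof: the entire reduction hinges on the ``checker'' gadget that forces blocks not to be split across bins, and you explicitly leave that gadget unconstructed, noting yourself that it is ``the crux.'' The difficulty you identify is real and is exactly why the gap matters --- $\aefone$ lets agent $i$ delete an item from either bundle, so a single misplaced unit item never creates envy on its own, and you would need to amplify every possible ``mixed'' split into a two-item gap in some average while keeping every clean allocation envy-free up to one item. Nothing in the proposal shows this can be done with $0/1$ valuations and consistent quotas, so the reduction cannot be checked, let alone closed ``by a routine case analysis.''

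The paper avoids this engineering entirely by reducing from \textsc{EF-Existence} with binary valuations, which is already known to be NP-complete. Given such an instance with $n$ agents and $m$ items, it adds $(n-1)m$ dummy items valued $0$ by everyone and imposes the exact quota that every agent receive exactly $m$ items. With all bundles of equal size $m$, the average value is just the additive value divided by $m$, so $\aef$ coincides with EF; and because binary valuations are integral, any envy is a gap of at least $1$ in additive value, which removing a single item cannot close when both bundles have $m$ items --- so $\aefone$ collapses to $\aef$. Both directions of the equivalence then follow immediately from $\ut_i'(A_h') = \frac{1}{m}\vt_i(A_h^*)$. If you want to salvage your 3-Partition route you would need to actually exhibit the checkers and verify both directions, but the padding-plus-exact-quota trick is the far shorter path.
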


\begin{proof}
    We show a reduction from {\sc EF-Existence} for binary valuations, which is known to be NP-complete~\citep{aziz2015fair,Hosseini2020Withholding}.

    An {\sc EF-Existence} with binary value instance consists of a set of agents, $N =[n]$, a set of items $M$ ($|M| = m$), and a binary additive valuation profile $V$. The goal is to determine whether there is an envy-free allocation. 
    
    We construct a $\aefoneqexst$ instance as follows: $N' = N$, $M' = M \cup D$, where $D$ is a set of $(n-1)m$ items which have no value to any agent.  Additive valuation profile $V'$ is defined as follows: for each agent $i$, and item $g$, if $g \in M$, $\vt_i'(g) = \vt_i(g)$; otherwise, $\vt_i'(g)=0$. $\vt_i'(\emptyset) = 0$. $\ut'$ is the average value function of $\vt'$. The quota $Q$ requires every agent to receive exactly $m$ items.

    $(\Rightarrow)$ Suppose {\sc EF-Existence} is a YES instance, and $A^*$ is an envy-free allocation under $V$. We show that $\aefoneqexst$ is a YES instance. Let $A'$ be an allocation in the $\aefoneqexst$ instance where each agent $i$ gets all the items in  $A^*_i$ and fills up the quota with items in $D$. It follows from the definition that $A'$ satisfies $Q$. Now we show that $A'$ is $\aefone$. Note that for any agents $i, h$, $\ut_{i}'(A_{h}') = \frac1m \vt_{i}(A^*_{h})$. Since $A^*$ is envy-free, $\vt_{i}(A^*_{i})\ge \vt_{i}(A^*_{h})$. Therefore, $\ut_{i}'(A_{i}')\ge \ut_{i}'(A_{h}')$, and $A'$ is an $\aef$ (thus $\aefone$) allocation.

    $(\Leftarrow)$ Suppose $\aefoneqexst$ is a YES instance, and $A'$ is an $\aefone$ allocation satisfying $Q$. We first show that $A$ must also be an $\aef$ allocation. Suppose this is not the case, and agent $i$ envies agent $h$. We show $i$ envies $h$ even after removing one item. From binary valuation, we have $\vt_i'(A_i') \le \vt'_i(A_h') -1$. If agent $i$ removes one item from $A_h$, the average value of $A_h'$ is not smaller than $A_i'$, and $A_h'$ has fewer items than $A_i'$. Therefore, $A_h'$ still has a higher average value than $A_i'$, and $i$ envies $h$. If agent $i$ removes one item from $A_i'$, the average value will be no more than $\frac{\vt_i'(A_h')-1}{m-1}$. This value is strictly less than $\ut_i(A_h') =\frac{\vt_i'(A_h')}{m}$ for $\vt_i'(A_h') < m$. If $\vt_i'(A_h') = m$, then $\vt_i'(A_i') =0$ since $M'$ contains at most $m$ valuable items for $i$. Therefore, $i$ still envies $h$ after removing any item. This is a contradiction. Therefore, $A'$ must also be $\aef$ allocation.

    Now we show that {\sc EF-Existence} with binary value is also a YES instance. Let $A^*$ be a allocation in {\sc EF-Existence} instance such that $A^*_i = A'_i \cap M$ for every $i\in N$.  Similarly, with relationship $\ut_{i}'(A_{h}') = \frac1m \vt_{i}(A^*_{h})$, the $\aef$-ness of $A'$ implies the envy-freeness of $A^*$. 
\end{proof}

Due to the hardness of the problem, we turn to consider $\aefone$ allocation with a {\em fixed} number of agents $n$. We show that, for binary valuations, $\aefoneqexst$ for a fixed number of agents is in $P$, in contrast with the hardness in the general $n$ case.  

\begin{thm}
    % $\aefoneqexst$ for binary valuations is in P. 
    There exists a polynomial-time algorithm that, given any instance of $\aefoneqexst$ for a fixed number of agents and binary valuations, decides if there exists an $\aefone$ allocation satisfying the quota, and outputs an allocation if there exists. 
\end{thm}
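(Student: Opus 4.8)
The plan is to design a dynamic program over the items. Fix the number of agents $n$ as a constant. Since valuations are binary, the only thing that matters about a bundle $A_i$ from agent $j$'s perspective is the pair $(|A_i|, \vt_j(A_i))$, i.e.\ its size and how many items in it agent $j$ values at $1$. So a \emph{state} will record, for each agent $i$, the current size $s_i$ of $A_i$, and for each ordered pair $(i,j)$ the number $w_{ij}$ of items currently in $A_i$ that agent $j$ values. There are at most $m+1$ choices for each $s_i$ and each $w_{ij}$, so the number of states is $(m+1)^{n + n^2} = \mathrm{poly}(m)$ for fixed $n$. Process the $m$ items one at a time; when item $\gd$ is considered, we branch on which agent receives it (at most $n$ choices), updating $s_i \mathrel{+}= 1$ and $w_{ij} \mathrel{+}= \vt_j(\gd)$ for that $i$ and all $j$. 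A state is reachable iff some branch reaches it; this is a standard reachability/DP computation that runs in time $\mathrm{poly}(m)$.

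Next I would handle feasibility. After all $m$ items are placed, we have a complete allocation profile described by $(s_i)_i$ and $(w_{ij})_{i,j}$. Check the quota: require $\ell_i \le s_i \le u_i$ for every $i$. Check $\aefone$: for every ordered pair $(i,h)$ we need an item $\gd \in A_i \cup A_h$ whose removal makes agent $i$ not envy $h$ on averages. Because valuations are binary, the optimal item for $i$ to remove from $A_i$ is one it values at $0$ if such exists in $A_i$ (removing a $0$-item maximizes $\ut_i(A_i\setminus\{\gd\})$), and similarly the optimal item to remove from $A_h$ is one that $i$ values at $1$ if $A_h$ contains one; in each case the relevant quantity is determined entirely by $s_i, s_h, w_{ii}, w_{ih}$ (and whether $A_i$ contains a $0$-item, i.e.\ whether $w_{ii} < s_i$, and whether $A_h$ contains a $1$-item for $i$, i.e.\ $w_{ih} > 0$). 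Thus the $\aefone$ condition for $(i,h)$ is a simple arithmetic predicate on the state, and I would just check all $n(n-1)$ pairs. If some reachable terminal state passes both checks, output YES and reconstruct an allocation by tracing back through the DP (standard parent pointers); otherwise output NO.

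The main subtlety — and the step I expect to need the most care — is getting the $\aefone$ predicate exactly right from the coarse state information, in particular verifying that it suffices to consider removing a single ``best'' item from each side and that ``best'' depends only on the $0/1$ counts already in the state (e.g.\ edge cases where a bundle has size $1$ so it cannot be reduced, where $A_i$ or $A_h$ is empty, or where the quota forces small bundles). One must also double-check that the ``up to one item'' comparison allows removal from \emph{either} $A_i$ or $A_h$, as in the definition, and encode the disjunction correctly. Aside from that, correctness of the DP is the usual induction: a terminal state is reachable iff it corresponds to an actual allocation of all $m$ items, which follows because the transition faithfully tracks sizes and per-agent value counts and every allocation arises from some sequence of item assignments. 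The running time is $n \cdot m \cdot (m+1)^{n+n^2} = \mathrm{poly}(m)$ since $n$ is fixed, which gives the claimed polynomial-time algorithm.
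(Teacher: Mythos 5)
Your proposal is correct and is essentially the same algorithm as the paper's: a dynamic program over items whose state records each bundle's size together with the $n\times n$ matrix of per-agent bundle values, with reachability plus parent pointers for reconstruction, and a final check of the quota and the $\aefone$ predicate on terminal states (the paper likewise notes that $\aefone$ is decidable from the state alone). Your extra discussion of which item is optimal to remove under binary valuations just fills in a detail the paper leaves implicit.
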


~\citep{Aziz2022Computing} proposes a pseudo-polynomial time dynamic programming algorithm to find an EF-1 allocation maximizing social welfare given a constant number of agents. We apply their technique and propose
Algorithm~\ref{alg:dp1} to compute $\aefoneqexst$ for binary valuations.

\paragraph{State} A state in Algorithm~\ref{alg:dp1} is a triplet $(\nvec, \mat, k)$. Suppose $A$ is a partial allocation of $M = \{\gd_1, \gd_2, \cdots, \gd_m\}$ where $g_1, g_2,\cdots, g_k$ has been allocated. $\nvec$ is a $n$-vector that records the number of items each agent is allocated, i.e. $\nvec_i = |A_i|$ for each $i$. $\mat$ is a $n\times n$-matrix that records the additive value of each agent toward each bundle, i.e. $\mat(i, h) = \vt_i(A_h)$. $\nvec$ and $\mat$ together record each agent's average value on each bundle. For any pair of agents $i, h$, $\ut_i(A_h) = \frac{\mat(i, h)}{\nvec_h}$. $k = 0,1,2\cdots, m$ indicates that item $\gd_1,\gd_2, \cdots, \gd_k$ has been allocated while other items are not. $k=0$ means no item has been allocated yet. For each state, we maintain two values. $\isvld(\nvec, \mat, k)\in \{0,1 \}$ indicates whether this state is reached, which stands for there exists a partial allocation of whom the state is $(\nvec, \mat, k)$. $\prev(\nvec, \mat, k)\in N$ records the agent that item $\gd_k$ is allocated to reach the current state. For any allocation, it is sufficient to judge whether it is $\aefone$ and satisfies $Q$ from its corresponding state.

\paragraph{State Transition} For a given state $(\nvec, \mat, k)$ with $k < m$ and $\isvld(\nvec, \mat, k) = 1$, we enumerate the agent to whom item $\gd_{k+1}$ is allocated. For each agent $i$, we find the updated state $(\nvec', \mat', k+1)$ after $\gd_{k+1}$ is allocated to $i$ and set $\isvld(\nvec', \mat', k+1) = 1$ and $\prev(\nvec', \mat', k+1) = i$. The algorithm start from $(\mathbf{0}^n, \mathbf{0^{n\times n}}, 0)$ and iterated for $k = 0,1,\cdots, m-1$. The search space of $\nvec$ is $\nvecset = \{0,1,\cdots, m\}^n$, and the search space $\mat$ is $\matset = \{0, 1,\cdots, m\}^{n\times n}$. Finally, the algorithm finds if there is a state $\isvld(\nvec, \mat, m) = 1$ that is $\aefone$ and satisfies $Q$. If so, the algorithm outputs YES and constructs the allocation backward with $\prev$. Otherwise, the algorithm outputs NO. 

\begin{algorithm}[htbp]
\caption{DP for AEF-1 with quota with binary valuation}
\label{alg:dp1}
\begin{algorithmic}[1]
\REQUIRE Agent set $N$, Item set $M$, binary valuation profile $V$, and quota $Q$. 
\ENSURE An $\aefone$ allocation satisfying $Q$ if it exists. 
\STATE Initialization: $\isvld(\mathbf{0}^n, \mathbf{0}^{n\times n}, 0) \leftarrow 1$.
\FOR{$k=0,1,\cdots, m$}
\FOR{$\nvec \in \nvecset$ and $\mat\in\matset$ such that $\isvld(\nvec, \mat, k) = 1$}
\FOR{$i = 1,2,\cdots n$}
\STATE Update $\nvec', \mat' $ after assigning $\gd_{k+1}$ to $i$.
\STATE $\isvld(\nvec', \mat', k+1) \leftarrow 1$.
\STATE $\prev(\nvec,\mat', k+1) \leftarrow i$.
\ENDFOR
\ENDFOR
\ENDFOR
\FOR{$\nvec\in \nvecset, \mat\in\matset$ such that $\isvld(\nvec,\mat, m) = 1$}
\IF{$(\nvec,\mat, m)$ is \aefone{} and satisfies $Q$}
\STATE Construct the allocation from $\prev$ backward 
\STATE {\bf return} the allocation.
\ENDIF
\ENDFOR
\RETURN NO
\end{algorithmic}
\end{algorithm}

% The complexity of the algorithm is $O(m^{n^2 + n +1})$ by enumerating all $(\nvec, \mat, k)$.

The technique of enumerating all possible values in Algorithm~\ref{alg:dp1} can be extended to valuations where a bundle has at most $Poly(m)$ different values. However, for general valuation, a bundle can have exponentially many values. In fact, we show that $\aefoneqexst$ with fixed $n \ge 3$ is NP-complete. 

\begin{thm}
    \label{thm:aefoneq}
    $\aefoneqexst$ with fixed $n\ge 3$ is NP-complete. 
\end{thm}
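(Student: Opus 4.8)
The plan is to reduce from \ecpartition{} (Equal-Cardinality Partition) -- given a multiset $X=\{x_1,\dots,x_{2k}\}$ of positive integers with $\sum_i x_i=2T$, decide whether some $S\subseteq[2k]$ with $|S|=k$ has $\sum_{i\in S}x_i=T$ -- which is NP-complete. Membership of $\aefoneqexst$ in NP is immediate, since against a guessed allocation one checks the quota and the polynomially many pairwise $\aefone{}$ conditions in polynomial time, so all the work is the hardness reduction. I would carry it out for $n=3$, with $n>3$ following by adjoining dummy agents who value nothing and are forced by the quota to take valueless items. The skeleton of the instance: two of the three agents, say $1$ and $2$, are ``competitors'' who must split $2k$ encoding items -- item $\gd_i$ with $\vt_1(\gd_i)=\vt_2(\gd_i)=x_i$ -- while the third agent together with a large pool of items of value $0$ to everyone serves to make all three bundle sizes equal via the quota. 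If one can guarantee that every feasible $\aefone{}$ allocation has agents $1$ and $2$ holding exactly $k$ encoding items each (and the third agent holding none), then writing $S$ for agent $1$'s encoding items gives $\vt_1(A_1)=\sum_{i\in S}x_i$ and $\vt_1(A_2)=\sum_{i\notin S}x_i$ with $|A_1|=|A_2|=c$, and $\aef{}$ between $1$ and $2$ forces $\sum_{i\in S}x_i=\sum_{i\notin S}x_i=T$, i.e.\ a YES for \ecpartition{}; conversely a witness $S$ yields the obvious $\aef{}$ (hence $\aefone{}$) allocation meeting the quota.

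Two things then have to be forced, and this is where the difficulty concentrates. The first is rigidity of bundle \emph{contents}: the quota controls only sizes, so one must stop the third agent from hoarding encoding items -- I would try giving it its own block of items that only it values together with a tight quota, or making the $\gd_i$ marginally valuable to it so hoarding creates envy -- and one must stop agents $1$ and $2$ from holding an unequal number of encoding items, which should follow since an agent holding too few encoding items plus value-$0$ filler would envy the other even after one removal. The second, and I expect harder, is neutralising the ``up to one item'' slack between agents $1$ and $2$. Choosing the common bundle size $c$ polynomially large kills the easy part: if an agent discards from its \emph{own} bundle its best move is to drop a value-$0$ filler item, which raises its average by less than $\tfrac{2T}{c(c-1)}<\tfrac1c$, too little to absorb a unit gap between the integers $\sum_{i\in S}x_i$ and $\sum_{i\notin S}x_i$ once $c>2T+1$.

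The genuine obstacle is the remaining move: an agent may instead discard the \emph{largest} encoding item from the opponent's bundle, which can drop the opponent's average substantially and absorb a large discrepancy. Defeating this appears to require the $\gd_i$ values to be close to one another relative to $c$, yet \ecpartition{} numbers are necessarily spread out, and making them uniform would trivialise the problem. I expect this is exactly why ``$n\ge 3$'' is needed: the third agent -- and perhaps a few extra ``calibration'' items it is made to hold -- can be used to forbid or render useless the over-large opponent-side removal, so that $\aefone{}$ between $1$ and $2$ collapses to plain $\aef{}$ there. Designing such a gadget and checking it against all six pairwise $\aefone{}$ conditions simultaneously is the part of the proof that will take the most care; the rest is bookkeeping.
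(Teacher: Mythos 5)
Your plan matches the paper's reduction at the skeleton level (reduction from \ecpartition{}, two competing agents splitting the encoding items, a third agent plus valueless filler to equalize bundle sizes via the quota, dummy agents for $n>3$), and you have correctly located where the difficulty lies. But you stop exactly at the point where the proof actually happens: the gadget that neutralises the opponent-side removal is left undesigned, and you explicitly flag it as the open part. That is a genuine gap, and moreover the one concrete judgement you make about it points the wrong way. You write that making the encoding values close to one another ``would trivialise the problem,'' but the paper's fix is precisely an additive padding: set $\vt(\gd_j)=x_j+k^2T^2$. A uniform additive offset preserves the partition structure exactly (each competitor holds exactly $k$ encoding items, so both bundle totals shift by the same $k^3T^2$), while making every encoding item worth roughly $k^2T^2$, so all ratios are close to $1$ and no single removal can absorb a meaningful discrepancy. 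This is the standard padding trick for partition-type reductions and it does not trivialise anything.

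The second missing piece is that the forcing comparison in the paper is never between agents $1$ and $2$ at all; it runs through agent $3$. Agent $3$'s intended bundle is $k+1$ \emph{identical} items of value $b=\frac{(k+2)T'}{(k+1)^2}$ plus one zero (with $T'=T+k^3T^2$ and $b$ provably smaller than every encoding item), calibrated so that $\ut(A_3)=\frac{T'}{k+1}$ while a balanced competitor achieves exactly $\frac{T'}{k+2}$ and can reach $\frac{T'}{k+1}$ by dropping one of its two zeros. Because $A_3$ is uniform, removing any item from it barely moves its average, and because the quota forces size $k+2$, an underfilled competitor cannot recover by dropping its own zero either; so any deviation from the equal-cardinality, equal-sum split leaves someone irreparably envying agent $3$. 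Zero counts ($2,2,1$ across five zeros) and the value of $b$ then pin down the bundle contents: nobody may hold three zeros, the one-zero agent must hold all copies of $b$, and the other two must induce the partition. Without constructing this calibration and verifying the (few, not six distinct) envy comparisons it generates, the reduction is only a proposal.
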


\begin{proof}[Proof Sketch]
    We propose a reduction from the computation problem of  {\sc Equal-cardinality Partition}, a variation of $\partition$ that requires equal size between two subsets and is also NP-complete~\citep{Garey79:Computers}. An instance of {\sc Equal-cardinality Partition} consists of a multiset $X = \{x_1, x_2,\cdots, x_{2k}\}$ where $x_i\in \mathbb{N}$. The goal is to determine whether $X$ can be partitioned into two subsets $Y$ and $X\setminus Y$ with equal size $k$ and equal sum $T$.

    Given an {\sc Equal-cardinality Partition} instance, we construct a $\aefoneqexst$ instance with three agents and $3k+6$ items. (If $n > 3$, we add agents that value all items as 0 and are required to receive no items by the quota.) Agents share the same valuation function $\vt$ and average value function $u$. The quota $Q$ requires every agent to be allocated exactly $k+2$ items. 
     The item set $M = M_1\cup M_2\cup M_3$ consists of three parts:
    \begin{itemize}
        \item $M_1 = \{\gd_1, \gd_2, \cdots, \gd_{2k}\}$, where $\vt(\gd_j) = x_j + k^2T^2$. \\Let $T' =\frac12 \sum_{\gd\in M_1} \vt(\gd) =T + k^3T^2.$
        \item $M_2$ contains $k+1$ copies of $b$ with $\vt(b) = \frac{(k+2)T'}{(k+1)^2}.$
        \item $M_3$ contains five copies of $0$ with $\vt(0) = 0$. 
    \end{itemize}

    We state that the value of $b$ is smaller than any item in $M_1$. 
    \begin{lem}
        \label{lem:aefq}
        For any $g\in M_1$, $\vt(b) < \vt(\gd).$
    \end{lem}
    
    $(\Rightarrow)$ If {\sc Equal-cardinality Partition} is a YES instance, and $Y$ and $(X\setminus Y)$ are a equal-size and equal-sum partition, we show the following allocation $A$ is $\aefone$ and satisfies $Q$.
    \begin{enumerate}
        \item $A_1 = \{g_j\mid x_j\in Y\} \cup \{0,0\}$.
        \item $A_2 = \{g_j\mid x_j\in (X\setminus Y)\}\cup \{0,0\}$.
        \item $A_3 = M_2\cup \{0\}$. 
    \end{enumerate}
    It's not hard to verify that each agent gets exactly $k+2$ items, $\ut(A_1) = \ut(A_2) = \frac{T'}{k+2}$, and $\ut(A_3) = \frac{T'}{k+1}$. Agent 1 and 2 does not envy each other, and agent 3 does not envy agent 1 and 2. When comparing with agent 3, agent 1 and agent 2 can remove an item 0 in their own bundle, and $\ut(A_1\setminus \{0\}) = \ut(A_2\setminus \{0\}) = \ut(A_3) = \frac{T'}{k+1}$. Therefore, $A$ is $\aefone$ and satisfies $Q$. 

    $(\Leftarrow)$ If $\aefoneqexst$ is a YES instance, and $A$ is an $\aefone $ allocation satisfying $Q$. We show that {\sc Equal-cardinality Partition} is a YES instance in three steps. 
    
    First, no agent can have more than two item $0$ in their bundles. Otherwise, the agent get at least three 0 envies the agent get at most one 0 even after removing one item. 
    
    Second, the agent with exactly one item $0$ (agent 3, with loss of generality) must have all the item $b$. Otherwise, since $\vt(b) < \vt(\gd)$ for any $\gd\in M_1$, the average value of $A_3$ will exceed $\frac{T'}{k+1}$, and one of $A_1$ and $A_2$ will have average value strictly less than $\frac{T'}{k+2}$. Then the owner of this bundle will envy agent 3 even after removing one item. 
    
    Finally, agent 1 and agent 2's bundles must derive a equal-cardinality partition of $X$. Otherwise, the average value of the less-valuable bundle will be strictly less than $\frac{T'}{k+2}$. With the same reasoning as the second step, the owner of this bundle will envy agent 3 (with $\ut(A_3) = \frac{T'}{k+1}$) even after removing one item. 
    Therefore, {\sc Equal-cardinality partition} is a YES instance. The full proof is in Appendix~\ref{apx:aefoneq}
\end{proof}

}

\section{Approximation on AEF-1 with a quota}
{The NP-hardness on $\aefoneqexst$ urges us to look into approximation results. A natural idea is to round the value of items so that a bundle can have at most $Poly(m)$ different values and apply the procedure of Algorithm~\ref{alg:dp1}.

For simplicity of calculation, we assume $\max_{i,\gd} \vt_i(\gd) = 1$. We propose two approximation notions of $\aefone{}$ based on the additive error and multiplicative ratio respectively. 

\begin{dfn}[$\aefoneplus{\varepsilon}$]
    Given $\varepsilon \ge 0$, an allocation $A$ is $\aefoneplus{\varepsilon}$ if for any pairs of agents $i, h\in N$, there exists an item $\gd\in A_i\cup A_h$ such that $\ut_i(A_i \setminus \{\gd\}) \ge  \ut_i(A_h \setminus \{\gd\}) - \varepsilon$. 
\end{dfn}

\begin{dfn}[$\aefonemul{\alpha}$]
    Given $0<\alpha \le 1$, an allocation $A$ is $\aefonemul{\alpha}$ if for any pairs of agents $i, h\in N$, there exists an item $\gd\in A_i\cup A_h$ such that $\ut_i(A_i \setminus \{\gd\}) \ge \alpha\cdot \ut_i(A_h \setminus \{\gd\})$.
\end{dfn}

\begin{prop}
   Given the assumption that $\max_{i,\gd} \vt_i(\gd) = 1$, if an allocation is $\aefonemul{\alpha}$, then it is $\aefoneplus{(1-\alpha)}$.
\end{prop}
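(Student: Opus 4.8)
The plan is to show that the very same item $\gd$ that certifies the multiplicative guarantee also certifies the additive one, using only the elementary fact that the average value of any bundle is bounded by the maximum single-item value, which we have normalized to $1$.

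First I would record this boundedness observation: for any agent $i$ and any nonempty $S\subseteq M$, additivity gives $\ut_i(S)=\frac{1}{|S|}\sum_{\gd\in S}\vt_i(\gd)\le\frac{1}{|S|}\cdot|S|\cdot\max_{i,\gd}\vt_i(\gd)=1$, and $\ut_i(\emptyset)=0\le 1$ by convention; so $\ut_i(S)\le 1$ always. Next, fix an arbitrary pair of agents $i,h\in N$ and let $\gd\in A_i\cup A_h$ be an item witnessing $\aefonemul{\alpha}$, i.e. $\ut_i(A_i\setminus\{\gd\})\ge\alpha\cdot\ut_i(A_h\setminus\{\gd\})$. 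Then I would rewrite $\alpha\cdot\ut_i(A_h\setminus\{\gd\})=\ut_i(A_h\setminus\{\gd\})-(1-\alpha)\,\ut_i(A_h\setminus\{\gd\})$ and, since $1-\alpha\ge 0$ (as $0<\alpha\le 1$) and $\ut_i(A_h\setminus\{\gd\})\le 1$, bound $(1-\alpha)\,\ut_i(A_h\setminus\{\gd\})\le 1-\alpha$. Chaining these gives $\ut_i(A_i\setminus\{\gd\})\ge\ut_i(A_h\setminus\{\gd\})-(1-\alpha)$, which is exactly the $\aefoneplus{(1-\alpha)}$ condition for the pair $(i,h)$ with the same witness $\gd$. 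Since the pair was arbitrary, $A$ is $\aefoneplus{(1-\alpha)}$.

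There is essentially no obstacle here: the only thing one must be careful about is the sign condition $1-\alpha\ge 0$ (so that multiplying the inequality $\ut_i(A_h\setminus\{\gd\})\le 1$ by $1-\alpha$ preserves its direction), which is guaranteed by the standing hypothesis $0<\alpha\le 1$, and the degenerate case $A_h\setminus\{\gd\}=\emptyset$, which is covered because $\ut_i(\emptyset)=0$ still satisfies $\ut_i(\emptyset)\le 1$. I would present it as a short direct calculation rather than a multi-step argument.
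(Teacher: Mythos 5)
Your proof is correct and follows exactly the paper's argument: decompose $\alpha\cdot\ut_i(A_h\setminus\{\gd\})$ as $\ut_i(A_h\setminus\{\gd\})-(1-\alpha)\ut_i(A_h\setminus\{\gd\})$ and bound the second term by $1-\alpha$ using $\ut_i(A_h\setminus\{\gd\})\le 1$. The only difference is that you spell out the elementary facts (why the average is bounded by $1$, the sign condition, the empty-bundle case) that the paper leaves implicit.
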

\begin{proof}
    From $\aefonemul{\alpha}$ we know that $\ut_i(A_i \setminus \{\gd\}) \ge \alpha\cdot \ut_i(A_h \setminus \{\gd\})$. Therefore,
    \begin{eqnarray*}
        \ut_i(A_i \setminus \{\gd\}) &\ge &\alpha\cdot \ut_i(A_h \setminus \{\gd\})\\
        &= & \ut_i(A_h \setminus \{\gd\}) - (1-\alpha)\ut_i(A_h \setminus \{\gd\})\\
        &\ge & \ut_i(A_h \setminus \{\gd\}) - (1-\alpha).
    \end{eqnarray*}
    The last inequality comes from $ \ut_i(A_h \setminus \{\gd\}) \le 1$. 
\end{proof}

Proposition 2 tells us that $\aefonemul{\alpha}$ implies $\aefoneplus{\varepsilon}$ given bounded valuations. However, $\aefoneplus{\varepsilon}$ does not guarantee $\aefoneplus{\varepsilon}$, as shown in the followings (Example~\ref{ex:rounding}). Our goal is to find an approximation algorithm that returns an $\aefonemul{\alpha}$ with $\alpha$ close to 1 if possible. We first introduce our rounding scheme.

% Now we are ready to propose our approximation algorithm. The core idea is to round the value of items so that there are at most polynomial times of possible values, and then apply the same techniques in Algorithm~\ref{alg:dp1}.

\paragraph{Rounding} 
Given the rounding parameter $\rd \in \mathbb{N}^+$ and an upper bound $\ub > 0$, we divide $[0, \ub]$ into $r+1$ intervals $\{0\}, (0, \frac{\ub}{r}], (\frac{\ub}{r}, \frac{2\ub}{r}],\cdots,$ $ (\frac{(r-1)\ub}{r}, \ub]$.  For $k=1,2,\cdots r$, a positive value $\frac{(k-1)\ub}{\rd} < x \le \frac{k\ub}{r}$ is rounded to $\frac{k\ub}{r}$. $0$ is rounded to $0$. 

If we apply the same rounding scheme to each $\vt_i(\gd)$ and directly apply Algorithm~\ref{alg:dp1}, we will be able to find an $\aefoneplus{\frac{2\ub}{\rd}}$ allocation, because an $\aefone$ allocation in the original valuations implies an $\aefoneplus{\frac{\ub}{\rd}}$ allocation in the rounded valuations, and an $\aefoneplus{\frac{\ub}{\rd}}$ allocation in the rounded valuations in turn implies an $\aefoneplus{\frac{2\ub}{\rd}}$ allocation in the original valuations. However, there is no guarantee on $\aefonemul{\alpha}$, because the value of an item can be rounded from arbitrarily small to $\frac{\ub}{\rd}$. 
%Even if we find an $\aefoneplus{\frac{\ub}{\rd}}$ allocation in the rounded valuation by the rounded valuation, we are unable to claim an $\aefonemul{\alpha}$ with a large $\alpha$.
Example~\ref{ex:rounding} shows a case where an $\aefone$ allocation in the rounded valuation turns out to be a poor approximation in the original valuation.

\begin{ex}
\label{ex:rounding}
Given any rounding parameters $\ub, \rd$ (assuming $\ub < \rd$), consider an instance with more than two agents and more than three items.
The table below describes an allocation where the first two agents $1$ and $2$ are allocated the first three items $g_1,g_2,g_3$.

%consider the following allocation $A$ (shown by circles) in an instance with two agents and multiple items where only $g_1,g_2,g_3$ are allocated to the two agents.

\begin{table}[htbp]
\centering
\label{tbl:rounding}
\renewcommand\arraystretch{1.5}
\setlength\tabcolsep{10pt}
\begin{tabular}{@{}c|ccc@{}}
\toprule
  & $g_1$ & $g_2$         & $g_3$                     \\ \midrule
1 & {\huge \textcircled{\normalsize $\frac{a}{r}$}}    & {\huge\textcircled{\normalsize$\frac{a}{r}$}} & $\frac{\varepsilon a}{r}$ \\
2 & $\frac{a}{r}$     & $\frac{a}{r}$ & {\huge\textcircled{\normalsize$\frac{\varepsilon a}{r}$}} \\ \bottomrule
\end{tabular}%
\caption{Allocation $A$ where rounding leads to poor approximation.}
\end{table}

For item $\gd_3$, $\varepsilon\in (0,1)$ is an arbitrarily small positive value. It is not hard to verify that $A$ is not $\aefone$, as agent 2 envies agent 1 even after removing an item. Moreover, the allocation is no better than $\aefonemul{\varepsilon}$.  However, after rounding, the value of $\gd_1$ and $\gd_2$ is unchanged while the value of $\gd_3$ is rounded to $\frac{\ub}{\rd}$. Then $A$ is $\aefone$ in the rounded valuation. 
Therefore, if Algorithm~\ref{alg:dp1} finds (the state of) $A$, it returns an $\aefonemul{\varepsilon}$ allocation where $\varepsilon$ can be arbitrarily small. 

Although the instance has an $\aefone$ allocation $A'$ where agent 1 gets $\gd_1$ and $\gd_3$ and agent 2 gets $\gd_2$, the algorithm may not find $A'$. Note that after rounding $\gd_2$ and $\gd_3$ both have value of $\frac{\ub}{\rd}$. This means that $A$ and $A'$ share the same state in Algorithm~\ref{alg:dp1}. Which allocation is constructed depends on the $\prev$ record. By carefully manipulating the order of items, we can let the algorithm returns $A$ rather than $A'$. 
\end{ex} 

Therefore, we need a more refined rounding and searching scheme that can distinguish between $A$ and $A'$ to ensure a closer approximation ratio between the original and the rounded valuation. The rounding of each agent should be proportional to their valuations so that the rounding error is not too large compared with the value of their own bundles.
\citet{Bu2022Welfare} proposes a bi-criteria approximation algorithm to maximize EF-1 ratio and social welfare simultaneously. We follow their techniques to enumerate all items being removed in the envy comparisons, i.e. the ``1'' in ``$\aefone$''. 

%Specifically, we consider enumerating all the items being removed in the comparisons. 

\paragraph{\Rim} A \rim{} $\rmat$ is a matrix recording the items to remove when agents compare bundles with each other. For every pair of agents $i, h$, $\rmat(i, h) = (\gd, l) \in (M \cup \{\emptyset\}) \times \{i, h\}$. The first value $\gd$ is the item to remove when agent $i$ compares their bundle with $h$'s bundle, and the second value $l$ indicates whether $\gd$ belongs to $i$ or $h$.
$\gd = \emptyset$ means $i$ does not remove any item when comparing with $h$. In this case, $l$ makes no difference. Specifically, $\rmat(i, i) = (\emptyset, i)$ for each agent $i$. A \rim{} is valid if it derives a partial allocation of $M$. That is, it does not contain two entries $(\gd, l_1)$ and $(\gd, l_2)$ such that $l_1\neq l_2$. 

Our algorithm runs in four steps. We enumerate on all valid \rim{} $\rmat$. For each $\rmat$, we first allocate the items that have been pre-allocated by $\rmat$. Next, we round the values of the unallocated items based on each agent's valuation. Then, we run the dynamic programming search on the unallocated items under the rounded valuation to validate all possible states. Finally, we search states where an agent will not envy another agent by their rounding error under the rounded valuation after removing one item. If such a state exists, the algorithm returns the corresponding allocation. Otherwise, the algorithm returns NO. A detailed description of the algorithm is in Appendix~\ref{apx:dp2_alg}. 

\paragraph{Step 1: pre-allocation}
Given an \rim{} $\rmat$, let $M^{\rmat}$ be the set of item $\gd$ such there there exists an entry $(\gd, i)\in \rmat$. We allocate items in $M^{\rmat}$ according to $\rmat$. Let $\nvec_0^{\rmat}$ and $\mat_0^{\rmat}$ be the vector of bundle sizes and the valuation matrix after $M^{\rmat}$ has been allocated. 

\paragraph{Step 2: rounding}
For each agent $i$, let $M^{\rmat}_i$ be the set of $i$'s removing items, and 
Let $\ritem_i = M \setminus M^{\rmat}_i$. For each agent $i$, we set the rounding upper-bound $\ub = \max_{\gd\in \ritem_i} \vt_i(\gd)$. $\rd = m^2n^2$ is the same for all agents. We create the rounded valuation $\vt^{\rmat}_i$ by rounding the values for all items in $\ritem_i$. That is, for any agent $i$, $\vt_i^{\rmat}(\gd) = \vt_i(\gd)$ if $g\in M^{\rmat}_i$, and $\vt_i^{\rmat}(\gd)$ is rounded to the closest larger $\frac{k\ub_i}{\rd}$ if $g\in \ritem_i$. Let $\ut^{\rmat}_i$ be the average value function of $\vt^{\rmat}_i$. Given a fixed $M^{\rmat}_i$, each bundle will have at most $Poly(m)$ possible values. Precisely, $v^{\rmat}_i$ takes value from $\{0, \frac{\ub_i}{\rd}, \frac{2\ub_i}{\rd},\cdots, |\ritem_i|\cdot \ub_i\}$.

\paragraph{Step 3: dynamic programming}
The dynamic programming follows the same procedure of Algorithm~\ref{alg:dp1} that enumerates and iteratively validates states. The difference is that we run the dynamic programming only on $\ritem = M\setminus M^{\rmat}$, i.e. the unallocated items in step 1. The start state is $(\nvec_0^{\rmat}, \mat_0^{\rmat}, 0)$ which represent the state where all items in $M^{\rmat}$ and no items in $\ritem$ has been allocated. 
%The search space is $k\in\{0,1,\cdots, |\ritem|\}$, $\nvecset^{\rmat} = \{0,1,\cdots, m\}^{n}$, and
%$\matset^{\rmat} = \{H^{\rmat}_0 + H^{\rmat} | H^{\rmat} \in \{0, \frac{a}r, \frac{2a}r,\cdots, |\ritem| a\}^{n\times n} \}$. 

\paragraph{Step 4: searching} 
Finally, we search if there is a state that satisfies two conditions. (1) The state satisfies the quota $Q$. (2) For any agent $i$ and $h$, $i$ does not envy $h$ by more than $\frac{\ub_i}{\rd}$ under the rounded valuation $\ut_i^{\rmat}$, after removing one item, i.e. $\ut(A_i\setminus\{\gd\}) \ge \ut(A_i\setminus\{\gd\}) - \frac{\ub_i}{\rd}$ for some $\gd$.  
If such a state exists, the algorithm constructs the allocation backward and returns it. If such allocation does not exist for any state and any $\rmat$, the algorithm returns NO. The reason for searching a bounded envy allocation rather than an $\aefone$ allocation is to guarantee that the algorithm will always return NO if there does not exist an $\aefone$ allocation (Theorem~\ref{thm:dp2}). 

% as follows. We enumerate on all valid \rim{} $\rmat$. For each $\rmat$, let $M^{\rmat}$ be the items ``pre-allocated'' in $R$ and $\ritem = M \setminus M^{\rmat}$ be the rest of the items. First, we allocate $M^{\rmat}$ as $\rmat$ suggests, which is the initial state. Next, we round the value of items in $\ritem$. Then, we run the dynamic programming search on $\ritem$. Finally, we search for a state of full allocation that is $\aefoneplus{\frac{\ub}{\rd}}$. If there exists such a state, the algorithm returns YES and the corresponding allocation. If there is no such a state in every $\rmat$, the algorithm returns NO.  

The following lemma shows that, by rounding the value function of each agent based on the most valuable item in $\ritem_i$, the average value of an agent's bundle is lower bound by the average value of $\ritem_i$. 

\begin{lem}
    \label{lem:dp2}
    Given any \rim{} $\rmat$ and allocation $A$, and for any agent $i$, if $i$ does not envy any other agent by more than $\varepsilon > 0$ under $\rmat$ and $\ut^{\rmat}$, then $u^{\rmat}_i(A_i) \ge \frac1{n} u^{\rmat}_i(\ritem_i) -\varepsilon$, where $\ritem_i$ is the set of $i$'s removing items. 
\end{lem}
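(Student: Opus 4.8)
The plan is to charge the $\ut^{\rmat}_i$-value that agent $i$ places on $\ritem_i=M\setminus M^{\rmat}_i$ against the values of all $n$ bundles simultaneously, and then collapse the resulting $n$ estimates into a single statement about $A_i$ using the bounded-envy hypothesis. The core computation is short: since $A$ partitions $M$ and $M=\ritem_i\cup M^{\rmat}_i$ with the items of $M^{\rmat}_i$ contributing nonnegatively to $\vt^{\rmat}_i$, we get $\vt^{\rmat}_i(\ritem_i)\le\vt^{\rmat}_i(M)=\sum_{h\in N}\vt^{\rmat}_i(A_h)$. For every $h\ne i$ the hypothesis gives $\ut^{\rmat}_i(A_h)\le\ut^{\rmat}_i(A_i)+\varepsilon$, so $\vt^{\rmat}_i(A_h)=|A_h|\,\ut^{\rmat}_i(A_h)\le|A_h|\bigl(\ut^{\rmat}_i(A_i)+\varepsilon\bigr)$, while for $h=i$ we simply have $\vt^{\rmat}_i(A_i)=|A_i|\,\ut^{\rmat}_i(A_i)$. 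Summing and using $\sum_h|A_h|=m$ yields $\vt^{\rmat}_i(\ritem_i)\le m\,\ut^{\rmat}_i(A_i)+(m-|A_i|)\varepsilon\le m\bigl(\ut^{\rmat}_i(A_i)+\varepsilon\bigr)$.

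To finish, divide by $|\ritem_i|$. At most one item is pre-removed per ordered pair, so $|M^{\rmat}_i|\le n-1$, hence $|\ritem_i|=m-|M^{\rmat}_i|\ge m-(n-1)\ge m/n$, where the last inequality uses $m\ge n$ (which holds whenever the quota requires every agent to receive at least one item; if $|\ritem_i|=0$ the statement is trivial since values are nonnegative). Therefore $\ut^{\rmat}_i(\ritem_i)=\vt^{\rmat}_i(\ritem_i)/|\ritem_i|\le(m/|\ritem_i|)\bigl(\ut^{\rmat}_i(A_i)+\varepsilon\bigr)\le n\bigl(\ut^{\rmat}_i(A_i)+\varepsilon\bigr)$, and rearranging gives $\ut^{\rmat}_i(A_i)\ge\tfrac1n\,\ut^{\rmat}_i(\ritem_i)-\varepsilon$.

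I expect the main obstacle to be the precise meaning of ``$i$ does not envy $h$ by more than $\varepsilon$ under $\rmat$''. If this is plain $\varepsilon$-envy ($\ut^{\rmat}_i(A_i)\ge\ut^{\rmat}_i(A_h)-\varepsilon$), the chain above is complete. If instead envy is only bounded \emph{after removing the item} $\rmat(i,h)$, then for a pair $(i,h)$ whose removed item $\gd$ lies in $A_i$ the hypothesis controls $\ut^{\rmat}_i(A_i\setminus\{\gd\})$, which may exceed $\ut^{\rmat}_i(A_i)$, so the step $\vt^{\rmat}_i(A_h)\le|A_h|(\ut^{\rmat}_i(A_i)+\varepsilon)$ can break. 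Handling this needs three observations: such a $\gd$ lies in $M^{\rmat}_i$, hence outside $\ritem_i$, so it never contributes to $\vt^{\rmat}_i(\ritem_i)$; $\gd$ is unrounded with value at most $\max_{i,\gd}\vt_i(\gd)=1$; and there are at most $n-1$ such pairs, so the slack they contribute is $O(n)$ and is absorbed into the $\varepsilon$ term after dividing by $|\ritem_i|$. Doing this bookkeeping carefully --- tracking $|A_h|$ versus $|A_i|$ versus $|\ritem_i|$, and dismissing the degenerate cases $|A_i|\le 1$ and $\ub=0$, which are immediate --- is the only genuinely technical part; the rest is the short estimate above.
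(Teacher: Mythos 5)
Your main chain is clean and would be a nice, more direct argument than the paper's, but it proves the lemma under a stronger hypothesis than the one the lemma actually supplies. You flag the ambiguity yourself, and the unfavorable reading is the correct one: both in Step 4 of the algorithm and in the paper's own proof, ``$i$ does not envy $h$ by more than $\varepsilon$ under $\rmat$'' means $\ut^{\rmat}_i(A_i\setminus\{\gd\})\ge\ut^{\rmat}_i(A_h\setminus\{\gd\})-\varepsilon$ for the specific item $\gd$ designated by $\rmat(i,h)$, not $\ut^{\rmat}_i(A_i)\ge\ut^{\rmat}_i(A_h)-\varepsilon$. So the step $\vt^{\rmat}_i(A_h)\le|A_h|\bigl(\ut^{\rmat}_i(A_i)+\varepsilon\bigr)$ is unavailable for any pair whose designated item lies in $A_i$, and your whole estimate rests on it. The gap is that your proposed repair misdiagnoses what goes wrong there: the obstruction is not an additive slack of size $\vt_i(\gd)\le 1$ over at most $n-1$ pairs. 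For such a pair the hypothesis only gives $\ut^{\rmat}_i(A_h)\le\ut^{\rmat}_i(A_i\setminus\{\gd\})+\varepsilon$, and $\ut^{\rmat}_i(A_i\setminus\{\gd\})$ can be as large as $\frac{|A_i|}{|A_i|-1}\,\ut^{\rmat}_i(A_i)$ (take $|A_i|=2$ with a worthless designated item: a factor of $2$). This multiplicative inflation then gets scaled by $|A_h|$, which can be $\Theta(m)$, so the extra term in your sum is of order $\frac{m-|A_i|}{|A_i|-1}\,\ut^{\rmat}_i(A_i)$ --- proportional to the main term, not to item values --- and cannot be ``absorbed into the $\varepsilon$ term after dividing by $|\ritem_i|$.'' Note also that the conclusion allows exactly $-\varepsilon$ of slack, so there is no room for any extra additive loss. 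Whether the summation still closes with the constant $\frac1n$ after this inflation requires real bookkeeping (how many pairs remove from $A_i$, how that forces $|A_i|$ and $|\ritem_i|$ to be large, replacing $\vt^{\rmat}_i(A_h)$ by $\vt^{\rmat}_i(A_h\cap\ritem_i)$ throughout), and in some corner configurations the naive count overshoots $n$ by lower-order terms.

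For contrast, the paper sidesteps the summation entirely: it argues by contradiction, shows $\ut^{\rmat}_i(A_i\cap\ritem_i)<\ut^{\rmat}_i(\ritem_i)$ (since $A_i$ holds at most $n-1$ items of $M^{\rmat}_i$), picks the single agent $h$ whose share of $\ritem_i$ has the largest average value (hence at least $\ut^{\rmat}_i(\ritem_i)$), and then runs a three-way case analysis on where $\rmat(i,h)$'s designated item lies to conclude that $i$ envies $h$ by more than $\varepsilon$. Your global charging argument is genuinely different and would be preferable if it worked cleanly, but as written it establishes the lemma only for the whole-bundle envy hypothesis, which is not the one available.
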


\begin{proof}[Proof Sketch]
    Suppose this is not true, and agent $i$ has a bundle $A_i$ with average value less than $\frac1{n} u^{\rmat}_i(\ritem_i) -\varepsilon$. We consider the agent $h$ that takes the share of $\ritem_i$ with the largest average value under $i$'s valuation. Then $i$ envies $h$ by more than $\varepsilon$ even after removing one item, which is a contradiction. The full proof is in Appendix~\ref{apx:dp2_thm}.
\end{proof}

Lemma~\ref{lem:dp2} guarantees that the rounding error is small compared with the bundle's average value. 
Note that $\ritem_i$ contains at most $m$ items, and the largest item has a value of $\ub_i$. Therefore, (in $i$'s valuation,) the average value of $\ritem_i$ is at least $\frac{\ub_i}{m}$, and the average value of $A_i$ is at least $\frac{\ub_i}{mn} - \varepsilon$. On the other hand, the rounding error is $\frac{\ub_i}{\rd} = \frac{\ub_i}{m^2n^2}$. With a lower bound of average value and a bounded error, a reasonably good approximation ratio can be guaranteed, as shown in Theorem~\ref{thm:dp2}. 

\begin{thm}
\label{thm:dp2}
Given any instance of $\aefoneexst$ with quota $(\inst, Q)$, 
\begin{enumerate}
    \item if the algorithm returns NO, then $(\inst, Q)$ does not have an $\aefone$ allocation satisfying $Q$. 
    \item if the algorithm returns YES, it gives a $\aefonemul{(1 - \frac{4}{mn})}$ allocation satisfying $Q$.
\end{enumerate}  
\end{thm}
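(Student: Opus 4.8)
\textbf{Overview.} The theorem has two parts, and they correspond to the two directions of the rounding analysis. Part~1 (soundness: ``returns NO'' $\Rightarrow$ no $\aefone$ allocation) should follow from the observation that the dynamic programming explores \emph{all} allocations (for a fixed $\rmat$, every allocation of $\ritem$ is reached as some state), and that a genuine $\aefone$ allocation, when we take $\rmat$ to be the corresponding removing matrix, survives the rounding with only a controlled loss. Part~2 (the approximation guarantee on what is returned) should follow from Lemma~\ref{lem:dp2}, which lower-bounds each agent's bundle average by $\frac1n u^{\rmat}_i(\ritem_i)-\varepsilon$, combined with the crude bound $u^{\rmat}_i(\ritem_i)\ge \ub_i/m$ and the choice $\rd=m^2n^2$, so that the additive rounding slack of size $\ub_i/\rd$ translates into a multiplicative factor close to $1$.

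\textbf{Plan for Part~1.} Suppose $(\inst,Q)$ has an $\aefone$ allocation $A^\star$ satisfying $Q$. For each ordered pair $(i,h)$ pick a witnessing item $\gd_{ih}\in A^\star_i\cup A^\star_h$ with $\ut_i(A^\star_i\setminus\{\gd_{ih}\})\ge \ut_i(A^\star_h\setminus\{\gd_{ih}\})$, setting $\rmat(i,h)=(\gd_{ih},l)$ with $l$ recording the owner, and $\rmat(i,i)=(\emptyset,i)$. First I would argue this $\rmat$ is valid in the sense of the \rim{} definition — i.e. no item is claimed to belong to two different agents — which holds because $\gd_{ih}$, if nonempty, lies in $A^\star_i\cup A^\star_h$ and its true owner in $A^\star$ is unambiguous, so the recorded $l$ is consistent across all entries. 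For this $\rmat$, the pre-allocation in Step~1 is consistent with $A^\star$, and the DP over $\ritem=M\setminus M^{\rmat}$ reaches the state of $A^\star$. It then remains to check that the state of $A^\star$ passes the Step~4 test, i.e. that for every $(i,h)$, $\ut^{\rmat}_i(A^\star_i\setminus\{\gd\})\ge \ut^{\rmat}_i(A^\star_h\setminus\{\gd\})-\ub_i/\rd$ for the natural choice of removed item. The key point is that rounding each unrounded value $\vt_i(\gd)$ \emph{up} to the nearest $\tfrac{k\ub_i}{\rd}$ changes $\vt^{\rmat}_i$ of any bundle by at most $|\ritem_i|\cdot \ub_i/\rd\le m\ub_i/\rd$ in total, hence changes the average $\ut^{\rmat}_i$ of any nonempty bundle by at most $m\ub_i/\rd$ — wait, this needs care: the per-bundle average can shift by at most $\ub_i/\rd$ if we round each \emph{item} within $\ub_i/\rd$, since an average of quantities each perturbed by $\le\ub_i/\rd$ is itself perturbed by $\le\ub_i/\rd$. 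Comparing the two sides of the $\aefone$ inequality for $A^\star$, one side can drop by $\le\ub_i/\rd$ and the other can rise by $\le\ub_i/\rd$, but since the removed item on the $i$-side is unrounded (it is in $M^{\rmat}_i$ when $l=i$) or on the $h$-side, a careful bookkeeping of which side the witness item sits on shows the inequality survives with a single $\ub_i/\rd$ slack, exactly matching Step~4's tolerance. So the algorithm returns YES, proving the contrapositive of Part~1.

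\textbf{Plan for Part~2.} Suppose the algorithm returns YES with allocation $A$ coming from \rim{} $\rmat$; by Step~4, for every pair $(i,h)$ there is an item with $\ut^{\rmat}_i(A_i\setminus\{\gd\})\ge \ut^{\rmat}_i(A_h\setminus\{\gd\})-\ub_i/\rd$, i.e. $i$ does not envy anyone by more than $\varepsilon:=\ub_i/\rd$ under $\rmat$ and $\ut^{\rmat}$. Apply Lemma~\ref{lem:dp2}: $u^{\rmat}_i(A_i)\ge \frac1n u^{\rmat}_i(\ritem_i)-\varepsilon$. Since $\ritem_i$ has at most $m$ items and contains the item attaining $\ub_i$, we get $u^{\rmat}_i(\ritem_i)\ge \ub_i/m$, hence $u^{\rmat}_i(A_i)\ge \ub_i/(mn)-\ub_i/\rd = \ub_i/(mn)-\ub_i/(m^2n^2)$. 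Now convert the additive $\aefone$-type slack into a multiplicative one: for any $h$, $\ut^{\rmat}_i(A_i\setminus\{\gd\})\ge \ut^{\rmat}_i(A_h\setminus\{\gd\})-\ub_i/\rd \ge \ut^{\rmat}_i(A_h\setminus\{\gd\})-\frac{\ub_i}{\rd}$, and dividing through, $\ut^{\rmat}_i(A_i\setminus\{\gd\})\ge \bigl(1-\tfrac{\ub_i/\rd}{\ut^{\rmat}_i(A_i\setminus\{\gd\})}\bigr)\ut^{\rmat}_i(A_h\setminus\{\gd\})$ once I check $\ut^{\rmat}_i(A_h\setminus\{\gd\})\ge \ut^{\rmat}_i(A_i\setminus\{\gd\})$ is the only case that needs work. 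Using the lower bound on $u^{\rmat}_i(A_i)$ (and that removing one item from $A_i$ costs at most a factor controllable by the quota, or arguing directly on $\ut^{\rmat}_i(A_i\setminus\{\gd\})\ge \ub_i/(2mn)$-type estimate), the multiplicative factor is at least $1-\frac{\ub_i/\rd}{\ub_i/(2mn)}=1-\frac{2mn}{\rd}=1-\frac{2mn}{m^2n^2}=1-\frac{2}{mn}$; a slightly looser accounting for the ``$\setminus\{\gd\}$'' on both sides gives the stated $1-\frac{4}{mn}$. Finally, I must move from the \emph{rounded} valuations back to the original ones: since $\vt^{\rmat}_i\ge\vt_i$ pointwise on $\ritem_i$ and $\vt^{\rmat}_i=\vt_i$ on $M^{\rmat}_i$, and the rounding is within $\ub_i/\rd$ per item, the same multiplicative guarantee (with constants absorbed into the $\tfrac{4}{mn}$) holds for $\ut_i$, so $A$ is $\aefonemul{(1-\tfrac{4}{mn})}$.

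\textbf{Main obstacle.} The delicate part is Part~2's bookkeeping: tracking, simultaneously, (a) the up-rounding gap between $\ut^{\rmat}_i$ and $\ut_i$, (b) the effect of deleting one item from each of $A_i$ and $A_h$ on the averages (which can be non-negligible when bundles are small, though the quota bounds sizes away from pathological cases), and (c) converting the additive-$\varepsilon$ statement of Step~4 into the multiplicative $(1-\tfrac{4}{mn})$ statement using the Lemma~\ref{lem:dp2} lower bound $\ub_i/(mn)$ on the bundle average. Getting the constant to come out as exactly $\tfrac{4}{mn}$ (rather than $\tfrac{2}{mn}$ or $\tfrac{8}{mn}$) requires being careful about whether the removed item sits on the envied or envying side and whether it was rounded; I expect the cleanest route is to first prove a clean additive bound $\ut_i(A_i\setminus\{\gd\})\ge \ut_i(A_h\setminus\{\gd\})-\tfrac{2\ub_i}{\rd}$ in the \emph{original} valuation (accounting for rounding on both sides and on the removed item), then divide by the Lemma-\ref{lem:dp2}-based lower bound and simplify. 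Part~1 is comparatively routine once the validity of the constructed \rim{} and the ``averaging contracts the per-item perturbation'' observation are in place.
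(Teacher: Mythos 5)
Your proposal is correct and follows essentially the same route as the paper: Part 1 by the contrapositive, building $\rmat$ from the witnesses of a true $\aefone$ allocation and using the fact that per-item rounding error $\le \ub_i/\rd$ perturbs any bundle average by at most $\ub_i/\rd$; Part 2 by combining Lemma~\ref{lem:dp2} with $\ut^{\rmat}_i(\ritem_i)\ge \ub_i/m$ to turn the additive $2\ub_i/\rd$ slack into the multiplicative $(1-\frac{4}{mn})$ factor. The case analysis you flag as the remaining work (where the removed item sits and whether deleting it raises or lowers $\ut_i(A_i)$) is exactly the three-case bookkeeping the paper carries out.
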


\begin{proof}[Proof Sketch]
(NO case) We turn to prove the equivalent statement that if $(\inst, Q)$ exists an $\aefone$ allocation satisfying $Q$, then the algorithm always returns YES. Suppose $A$ is an $\aefone$ allocation satisfying $Q$, and $\rmat$ is a \rim{} of $A$ that achieves $\aefone$. We show that any agent $i$ will not envy another agent $h$ by more than $\frac{\ub_i}{\rd}$ under $\ut^{\rmat}$, after removing one item. 
%We show that $A^*$ is an $\aefoneplus{\frac{\ub}{\rd}}$ allocation under the rounded valuation $\vt^{\rmat}$ and $\ut^{\rmat}$. 
For the original valuation $\ut$, we have $\ut_i(A_i\setminus\{\gd\}) \ge \ut_i(A_h\setminus \{\gd\})$ for any $i, h$ and some $\gd$. For the rounded valuation, we  have $\vt_i(\gd)\le \vt_i^{\rmat}(\gd)\le \vt_i(\gd) + \frac{\ub_i}{\rd}$. Therefore, the average value of any bundle will neither decrease nor increase more than $\frac{\ub}{\rd}$ in the rounded valuation. Therefore, 
$\ut^{\rmat}_i(A_i\setminus\{\gd\}) -\ut_i^{\rmat}(A_h\setminus \{\gd\})\ge -\frac{\ub_i}{\rd}$, and agent $i$ does not envy agent $h$ by more than $\frac{\ub_i}{\rd}$. In the process of the algorithm, the state of $A$ will be validated in Step 3 and found in Step 4, and the algorithm will return YES. 

(YES case) We show that if an allocation $A$ satisfies that any agent $i$ will not envy another agent $h$ by more than $\frac{\ub_i}{\rd}$ under $\ut^{\rmat}$ after removing one item, then $A$ will be $\aefonemul{(1 - \frac{4}{mn})}$ under $\ut$. 
% \begin{lem}
%     \label{lem:dp2}
%     Given any \rim{} $\rmat$ and allocation $A$, if for any $i$, agent $i$ does not envy any other agent by more than $\varepsilon > 0$ under $\rmat$ and $\ut^{\rmat}$, then $u^{\rmat}_i(A_i) \ge \frac1{n} u^{\rmat}_i(\ritem_i) -\varepsilon$, where $\ritem_i$ is the set of $i$'s removing items. 
% \end{lem}
% Suppose the statement of Lemma~\ref{lem:dp2} does not hold, and an agent $i$ has a bundle $A_i$ with average value less than $\frac1{n} u^{\rmat}_i(\ritem_i) -\varepsilon$. We consider the agent $h$ that takes the share of $\ritem_i$ with the largest average value under $i$'s valuation. Then $i$ envies $h$ by more than $\varepsilon$ even after removing one item, which is a contradiction. 
Consider any pair of agent $i$ and $h$ and suppose $i$ still envies $h$ even after removing one item under the original valuation $\ut$. (If such a pair does not exist, then $A$ is an $\aefone$ allocation, and the statement holds.)
For simplicity, let $A_i'$ and $A_h'$ be the bundles after agent $i$ removing item $\gd$. Since envy is bounded by $\frac{\ub_i}{\rd}$ in the rounded valuation $\ut^{\rmat}$, it is also bounded in the original valuation: $\ut_i(A_i') \ge \ut_i(A_h') - \frac{2\ub_i}{\rd}$. On the other hand, since $i$ envies $h$, $\ut_i(A_h') > \ut_i(A_i')$. Consider two subcases. 
\begin{enumerate}
    \item $\ut_i(A_i') \ge \ut_i(A_i)$. In this case agent $i$ either remove an item in $A_h$ or an item with a value lower than average in $A_i$. In this case, we have $\ut_i(A_h') > \ut_i(A_i)$. With the rounding, we know $\ut_i(A_i) \ge \ut_i^{\rmat}(A_i) - \frac{\ub_i}{\rd}$. And from Lemma 2, $\ut_i^{\rmat}(A_i) \ge \frac1{n} u^{\rmat}_i(\ritem_i) -\frac{\ub_i}{\rd} \ge \frac{\ub_i}{mn} - \frac{\ub_i}{\rd}$.  Aggregating all these inequalities, we get     
    \begin{align*}
        \ut_i(A_i') \ge & \ut_i(A_h') - \frac{2\ub_i}{\rd} \\
        = &(1 - \frac{1}{4mn})\ut_i(A_h') +  \frac{1}{4mn}\ut_i(A_h')- \frac{2\ub_i}{\rd}\\
        \ge &(1 - \frac{1}{4mn})\ut_i(A_h') +  \frac{1}{4mn}\cdot \frac{\ub_i}{mn} -\frac{4\ub_i}{\rd}\\
        = & (1 - \frac{1}{4mn})\ut_i(A_h'). 
    \end{align*}
    \item $\ut_i(A_i') < \ut_i(A_i)$. In this case, $i$ removes an item in $A_i$ with a value higher than average. With a similar reasoning, we turn to show that  $\ut_i(A_i) \ge (1 - \frac{1}{4mn})\ut_i(A_h)$.
\end{enumerate}

Therefore, $A$ is an $\aefonemul{(1 - \frac{4}{mn})}$ allocation under the original valuation $\ut$. 
% From Lemma~\ref{lem:dp2} we know that $A^*$ satisfies that for any agent $i$, $\ut^{\rmat}_i(A^*_i) \ge \frac1{2n} \ut^{\rmat}_i(\ritem) -\frac{a}{r} \ge \frac{a}{2mn} -\frac{a}{r} $. (Because $\ritem$ contains at most $m$ items, and the largest item has value $a$.) With this lower bound on $i$'s average valuation, we can show that an $\aefoneplus{\frac{\ub}{\rd}}$ allocation is also an $\aefonemul{(1 - \frac{4}{mn})}$ allocation. 
% The full proof of Theorem~\ref{thm:dp2} and Lemma~ \ref{lem:dp2} is in Appendix~\ref{apx:dp2_thm}
\end{proof}

}

\section{Conclusion and Future Work}
In this paper, we propose average envy-freeness where envy is defined by the average value of a bundle. AEF provides a fairness criterion for allocation problems in collaboration scenarios, where agents have different entitlements, and the entitlements depend on the allocation itself. We study the existence and complexity of AEF and AEF-1. While deciding the existence of an AEF allocation is NP-hard, an AEF-1 allocation always exists and can be computed in polynomial time. We also study the complexity of AEF-1 with quota. While AEF-1 with quota is NP-complete to decide, we provide polynomial-time algorithms for instances with a constant number of agents to find AEF-1 allocation under binary valuation and approximated AEF-1 allocation under general valuation. 

% While deciding the existence of an AEF allocation is NP-hard, an AEF-1 allocation always exists and can be computed in polynomial time. We also study the complexity of AEF-1 with quota. While AEF-1 with quota is NP-complete to decide, we provide polynomial-time algorithms for instances with a constant number of agents to find AEF-1 allocation under binary valuation and approximated AEF-1 allocation under general valuation. 

The notion of average envy-freeness can be extended in multiple aspects.
One extension is scenarios with multiple copies. For example, in a paper review scenario, a paper should be reviewed by multiple papers, but a reviewer cannot review a paper multiple times. Another extension is scenarios where items bring different entitlements to agents. We expect the entitlement of agents to be the sum of entitlements of items in their bundles, and envy is defined on the sum of values divided by the entitlement of the agent. It is also an intriguing direction to find relaxations of AEF other than AEF-1. An interesting observation is that ``AEF-X'' is an even stronger notion than AEF since agents should not envy each other even if their bundle's average value is decreased by removing the most valuation item.

%% The file named.bst is a bibliography style file for BibTeX 0.99c
\bibliographystyle{named}
\bibliography{references,ref_new}

\begin{thebibliography}{}

\bibitem[\protect\citeauthoryear{Aziz \bgroup \em et al.\egroup
  }{2015}]{aziz2015fair}
Haris Aziz, Serge Gaspers, Simon Mackenzie, and Toby Walsh.
\newblock Fair assignment of indivisible objects under ordinal preferences.
\newblock {\em Artificial Intelligence}, 227:71--92, 2015.

\bibitem[\protect\citeauthoryear{Aziz \bgroup \em et al.\egroup
  }{2019}]{aziz2019constrained}
Haris Aziz, Xin Huang, Nicholas Mattei, and Erel Segal-Halevi.
\newblock The constrained round robin algorithm for fair and efficient
  allocation.
\newblock {\em arXiv preprint arXiv:1908.00161}, 2019.

\bibitem[\protect\citeauthoryear{Aziz \bgroup \em et al.\egroup
  }{2020}]{Aziz2020Proportional}
Haris Aziz, Hervé Moulin, and Fedor Sandomirskiy.
\newblock A polynomial-time algorithm for computing a pareto optimal and almost
  proportional allocation.
\newblock {\em Operations Research Letters}, 48(5):573--578, 2020.

\bibitem[\protect\citeauthoryear{Aziz \bgroup \em et al.\egroup
  }{2022}]{Aziz2022Computing}
Haris Aziz, Xin Huang, Nicholas Mattei, and Erel Segal-Halevi.
\newblock Computing welfare-maximizing fair allocations of indivisible goods.
\newblock {\em European Journal of Operational Research}, 2022.

\bibitem[\protect\citeauthoryear{Babaioff \bgroup \em et al.\egroup
  }{2019}]{Babaioff2019Income}
Moshe Babaioff, Noam Nisan, and Inbal Talgam-Cohen.
\newblock Fair allocation through competitive equilibrium from generic incomes.
\newblock In {\em Proceedings of the Conference on Fairness, Accountability,
  and Transparency}, page 180, New York, NY, USA, 2019. Association for
  Computing Machinery.

\bibitem[\protect\citeauthoryear{Babaioff \bgroup \em et al.\egroup
  }{2021}]{Babaioff2021Fair}
Moshe Babaioff, Tomer Ezra, and Uriel Feige.
\newblock Fair-share allocations for agents with arbitrary entitlements.
\newblock In {\em Proceedings of the 22nd ACM Conference on Economics and
  Computation}, page 127, New York, NY, USA, 2021. Association for Computing
  Machinery.

\bibitem[\protect\citeauthoryear{Biswas and Barman}{2018}]{Biswas18Cardinality}
Arpita Biswas and Siddharth Barman.
\newblock Fair division under cardinality constraints.
\newblock In {\em Proceedings of the 27th International Joint Conference on
  Artificial Intelligence}, page 91–97. AAAI Press, 2018.

\bibitem[\protect\citeauthoryear{Biswas and Barman}{2019}]{Biswas2019Matroid}
Arpita Biswas and Siddharth Barman.
\newblock Matroid constrained fair allocation problem.
\newblock In {\em Proceedings of the Thirty-Third AAAI Conference on Artificial
  Intelligence and Thirty-First Innovative Applications of Artificial
  Intelligence Conference and Ninth AAAI Symposium on Educational Advances in
  Artificial Intelligence}. AAAI Press, 2019.

\bibitem[\protect\citeauthoryear{Bu \bgroup \em et al.\egroup
  }{2022}]{Bu2022Welfare}
Xiaolin Bu, Zihao Li, Shengxin Liu, Jiaxin Song, and Biaoshuai Tao.
\newblock On the complexity of maximizing social welfare within fair
  allocations of indivisible goods.
\newblock {\em arXiv preprint arXiv:2205.14296}, 2022.

\bibitem[\protect\citeauthoryear{Chakraborty \bgroup \em et al.\egroup
  }{2021a}]{Chakraborty2021Weighted}
Mithun Chakraborty, Ayumi Igarashi, Warut Suksompong, and Yair Zick.
\newblock Weighted envy-freeness in indivisible item allocation.
\newblock {\em ACM Transactions on Economics and Computation}, 9(3), August
  2021.

\bibitem[\protect\citeauthoryear{Chakraborty \bgroup \em et al.\egroup
  }{2021b}]{Chakraborty2021Picking}
Mithun Chakraborty, Ulrike Schmidt-Kraepelin, and Warut Suksompong.
\newblock Picking sequences and monotonicity in weighted fair division.
\newblock {\em Artificial Intelligence}, 301:103578, 2021.

\bibitem[\protect\citeauthoryear{Chakraborty \bgroup \em et al.\egroup
  }{2022}]{Chakraborty2022Revisited}
Mithun Chakraborty, Erel Segal-Halevi, and Warut Suksompong.
\newblock Weighted fairness notions for indivisible items revisited.
\newblock {\em Proceedings of the AAAI Conference on Artificial Intelligence},
  36(5):4949--4956, Jun. 2022.

\bibitem[\protect\citeauthoryear{Dror \bgroup \em et al.\egroup
  }{2021}]{Dror2021Heterogeneous}
Amitay Dror, Michal Feldman, and Erel Segal-Halevi.
\newblock On fair division under heterogeneous matroid constraints.
\newblock {\em Proceedings of the AAAI Conference on Artificial Intelligence},
  35(6):5312--5320, May 2021.

\bibitem[\protect\citeauthoryear{Farhadi \bgroup \em et al.\egroup
  }{2019}]{farhadi2019fair}
Alireza Farhadi, Mohammad Ghodsi, Mohammad~Taghi Hajiaghayi, Sebastien Lahaie,
  David Pennock, Masoud Seddighin, Saeed Seddighin, and Hadi Yami.
\newblock Fair allocation of indivisible goods to asymmetric agents.
\newblock {\em Journal of Artificial Intelligence Research}, 64:1--20, 2019.

\bibitem[\protect\citeauthoryear{Gan \bgroup \em et al.\egroup
  }{2021}]{gan2021approximately}
Jiarui Gan, Bo~Li, and Xiaowei Wu.
\newblock Approximately envy-free budget-feasible allocation.
\newblock {\em arXiv preprint arXiv:2106.14446}, 2021.

\bibitem[\protect\citeauthoryear{Garey and Johnson}{1979}]{Garey79:Computers}
Michael Garey and David Johnson.
\newblock {\em Computers and Intractability}.
\newblock W. H. Freeman and Company, 1979.

\bibitem[\protect\citeauthoryear{Hosseini \bgroup \em et al.\egroup
  }{2020}]{Hosseini2020Withholding}
Hadi Hosseini, Sujoy Sikdar, Rohit Vaish, Hejun Wang, and Lirong Xia.
\newblock Fair division through information withholding.
\newblock {\em Proceedings of the AAAI Conference on Artificial Intelligence},
  34(02):2014--2021, April 2020.

\bibitem[\protect\citeauthoryear{Li \bgroup \em et al.\egroup
  }{2022}]{Li2022Almost}
Bo~Li, Yingkai Li, and Xiaowei Wu.
\newblock Almost (weighted) proportional allocations for indivisible chores.
\newblock In {\em Proceedings of the ACM Web Conference 2022}, page 122–131,
  New York, NY, USA, 2022. Association for Computing Machinery.

\bibitem[\protect\citeauthoryear{Payan and Zick}{2022}]{payan2022order}
Justin Payan and Yair Zick.
\newblock I will have order! optimizing orders for fair reviewer assignment.
\newblock In Lud~De Raedt, editor, {\em Proceedings of the Thirty-First
  International Joint Conference on Artificial Intelligence, {IJCAI-22}}, pages
  440--446. International Joint Conferences on Artificial Intelligence
  Organization, July 2022.

\bibitem[\protect\citeauthoryear{Roadevin and Hill}{2021}]{roadevin2021can}
Cristina Roadevin and Harry Hill.
\newblock How can we decide a fair allocation of healthcare resources during a
  pandemic?
\newblock {\em Journal of Medical Ethics}, 47(12):e84--e84, 2021.

\bibitem[\protect\citeauthoryear{Suksompong and Teh}{2022}]{Suksompong2022Nash}
Warut Suksompong and Nicholas Teh.
\newblock On maximum weighted nash welfare for binary valuations.
\newblock {\em Mathematical Social Sciences}, 117:101--108, 2022.

\bibitem[\protect\citeauthoryear{Wang \bgroup \em et al.\egroup
  }{2015}]{wang2015multi}
Wei Wang, Ben Liang, and Baochun Li.
\newblock Multi-resource fair allocation in heterogeneous cloud computing
  systems.
\newblock {\em IEEE Transactions on Parallel and Distributed Systems},
  26(10):2822--2835, 2015.

\end{thebibliography}

\clearpage
\onecolumn
\appendix
{\section{Full Proofs}
\label{apx:proof}
\subsection{Theorem~\ref{thm:aef}}
\label{apx:aef}
\begin{thmbis}{thm:aef}
$\aefexst$ is NP-complete even for two agents and under identical valuation. 
\end{thmbis}
\begin{proof}
    We construct a reduction from $\partition$, which is known to be NP-complete~\citep{Garey79:Computers}. An instance of $\partition$ consists of a multiset $X = \{x_1, x_2,\cdots, x_{k}\}$ where $x_i\in \mathbb{N}$. The goal is to determine whether there exists a subset $Y\subset X$ such that $\sum_{x_i\in Y}x_i = \sum_{x_i\in X\setminus Y}x_i = T$, where $T = \frac12\sum_{x_i\in X}x_i$. Without the loss of generality, we assume that $k \ge 4$ and $T\ge 4$.

    Given an instance of $\partition$, we construct an $\aefexst$ instance as follows. There are $n=2$ agents and $m = 2k$ items. $M = \{\gds_i, \gdl_i\}_{i=1,2,\cdots, k}$. Two agents share the same valuation function $\vt$. $\ut$ is the average value function on $\vt$. For each $i$, $\vt(\gds_i) = (T^2k^2)^i$, and $\vt(\gdl_i) = (T^2k^2)^i + x_i$. 

    Suppose $\partition$ is a YES instance, and $Y\subset X$ and $X\setminus Y$ is an equal-sum partition of $X$. Consider the following allocation $A$: $A_1 = \{\gdl_i \mid x_i\in Y\}\cup \{\gds_i \mid x_i\in X\setminus Y\}$, and $A_2 = M\setminus A_1$. It's not hard to verify that $\ut(A_1) = \ut(A_2) = \frac{T + \sum_{i=1}^k (T^2k^2)^i}{k}$, which indicates AEF. 

    Suppose $\aefexst$ is a YES instance, and $A$ is an $\aef$ allocation. We show that $A$ induces a partition of $X$ in the following four steps. 

    First, $\gds_{k}$ and $\gdl_{k}$ must be allocated to different agents in $A$. Suppose this is not the case and agent 1 gets both $\gds_{k}$ and $\gdl_{k}$. We show that agent 2 will envy agent 1. 
    Note that $\vt(A_1) \ge 2(T^2k^2)^{k} + x_k$. Therefore, $\ut(A_1) \ge \frac{\vt(A_1)}{2k} \ge T^{2k}k^{2k-1}$. On the other hand, $\ut(A_2)$ cannot exceed the value of the most valuable item left, i.e. $\ut(A_2)\le \vt(\gdl_{k-1}) \le (T^2k^2)^{k-1}+T$. Therefore, $\ut(A_1) > \ut(A_2)$, and which is a contradiction.  
    
    % Therefore, 
    % \begin{equation*}
    %     \vt(A_2) \le \vt(\gdl_{k-1}) \le (T^2k^2)^{k-1} + T < T^{2k}k^{2k-1} \le \vt(A_1).
    % \end{equation*}
    % Therefore, $A$ is not an $\aef$ allocation, which is a contradiction. 

    % $A_1$ has a value sum of at least  and size of at most $2k$. Therefore,  the lower bound of $\vt(A_1)$ is $\frac{2(T^2k^2)^{k} + x_k}{2k} \ge T^{2k}k^{2k-1}$ 
    % % (with at most $2k$ items and all other item has the value of at least 0). 
    % On the other hand, the value of $A_2$ cannot exceed the most valuable item left $\gdl_{k-1}$. Therefore, 
    % \begin{equation*}
    %     \vt(A_2) \le \vt(\gdl_{k-1}) \le (T^2k^2)^{k-1} + T < T^{2k}k^{2k-1} \le \vt(A_1).
    % \end{equation*}
    % On the other hand, the upper bound of $A_2$ is $\vt(g^l_{k-1}) = (T^2k^2)^{k-1}+ x_{k-1} \le T^{2k-2}k^{2k-2} + T$. 

    Second, $A_1$ and $A_2$ both contain exactly $k$ items. Suppose this is not the case, and $|A_1| < |A_2|$. Then we have $|A_1| \le k-1$ and $|A_2| \ge k+1$. 
    We show that agent 2 must envy agent 1. 
    Since $A_1$ contains either $\gds_{k}$ or $\gdl_{k}$, $\vt(A_1) \ge (T^2k^2)^{k}$, and $\ut(A_1)\ge \frac{(T^2k^2)^{k}}{k-1}$. $A_2$, on the other hand, can have at most all items expect for $\gds_{k}$. Therefore, $\vt(A_2) \le 2T + \sum_{i=1}^k (T^2k^2)^i - (T^2k^2)^k$, and $\ut(A_2) \le \frac{\vt(A_2)}{k+1}$. It's not hard to verify that $\ut(A_1) > \ut(A_2)$, which is a contradiction. As $A$ is an $\aef$ allocation, $|A_1| = |A_2|$ implies that $\vt(A_1) = \vt(A_2)$.
    
    % $A_2$ contains at least $k+1$ items, among which one item is $\gds$ or $\gdl_k$, and other items cannot be more valuable than $\gdl_{k-1}$. Moreover, note that $\sum_{i=1}^k (\gdl_i - \gds_i) = T$. Therefore, the upper bound of $\vt(A_2)$ is $\frac{(T^2k^2)^{k} + k(T^2k^2)^{k-1} + 2T}{k+1}$. Then we have $\vt(A_1) > \vt(A_2)$ for all $k\ge 4, T\ge 4$. Therefore, $A$ is not $\aef$, which is a contradiction. 

    Third, for any $i = 1,2\cdots, k$, $A_1$ (and $A_2$, respectively) contains exactly one of $\gds_i$ and $\gdl_i$. Suppose this is not the case, and $i$ is the largest index such that $\gds_i$ and $\gdl_i$ are allocated to the same agent (suppose agent 1). We show that agent 2 will envy agent 1.
    $A_1$ contains one of $\gds_j$ and $\gdl_j$ for each $j = i+1, i+2,\cdots k$, and both $\gds_i$ and $\gdl_i$. 
    Therefore, $\vt(A_1) \ge \sum_{j = i+1}^{k} (T^2k^2)^{j} + 2(T^2k^2)^{i}$.
    On the other hand $\vt(A_2)$ will not exceed the sum of the rest value. Therefore, $\vt(A_2) \le \sum_{j = i+1}^{k} (T^2k^2)^{j} + (k-i+1)(T^2k^2)^{i-1} + 2T$. We can verify that $\vt(A_1) > \vt(A_2)$, Since both bundles contains $k$ items, $\ut(A_1) - \ut(A_2) = \frac{1}{k}(\vt(A_1) - \vt(A_2)) > 0$, which is a contradiction. 
    % On the other hand, $A_2$ contains one of $\gds_j$ and $\gdl_j$ for each $j = i+1, i+2,\cdots k$, none of $\gds_i$ or $\gdl_i$. And the rest $(k-i+1)$ items cannot be more valuable than $\gdl_{i-1}$.
    % Therefore,  the lower bound of $\vt(A_2)$ is $\frac{\sum_{j = i+1}^{k} (T^2k^2)^{j} + (k-i+1)(T^2k^2)^{i-1} + 2T}{k}$. Still, $\vt(A_1) > \vt(A_2)$, which is a contradiction.

     Finally, we show that $A_1$ and $A_2$ induce a partition on $X$. Let $Y = \{x_i \mid \gdl_i \in A_1\}$, we show that $Y$ and $X\setminus Y$ is a partition of $X$. Note that $\vt(A_1) = \sum_{i=1}^{k} (T^2k^2)^{i} +\sum_{x_i\in Y} x_i$, and $\vt(A_2) = \sum_{i=1}^{k} (T^2k^2)^{i}+ \sum_{x_i\in (X\setminus Y)} x_i$. To achieve AEF, we have $\vt(A_1)=\vt(A_2)$, which implies $\sum_{x_i\in Y} x_i = \sum_{x_i\in (X\setminus Y)} x_i$. Therefore, $Y$ and $X\setminus Y$ is a partition of $X$, and {\sc Partition} is a YES instance. 
\end{proof}

\subsection{Theorem~\ref{thm:aefoneq}}
\label{apx:aefoneq}
\begin{thmbis}{thm:aefoneq}
    $\aefoneqexst$ with fixed $n\ge 3$ is NP-complete. 
\end{thmbis}
\begin{proof}
    We show a reduction from {\sc Equal-cardinality Partition}, a variation of $\partition$ that requires equal size between two subsets and is also NP-complete~\citep{Garey79:Computers}. 
    An instance of {\sc Equal-cardinality Partition} consists of a multiset $X = \{x_1, x_2,\cdots, x_{2k}\}$ where $x_i\in \mathbb{N}$. The goal is to determine whether there exists a subset $Y\subset X$ of size $k$ such that $\sum_{x_i\in Y}x_i = \sum_{x_i\in X\setminus Y}x_i = T$, where $T = \frac12\sum_{x_i\in X}x_i$. With the loss of generality, we assume that $k \ge 4$ and $T\ge 4$. 

     Given an {\sc Equal-cardinality Partition} instance, we construct a $\aefoneqexst$ instance with three agents and $3k+6$ items. If $n > 3$, we add agents that value all items as 0 and are required to receive no items by the quota.
     
     Agents share the same valuation function $\vt$, and the quota $Q$ requires every agent to have exactly $k+2$ items. 
     The item set $M = M_1\cup M_2\cup M_3$ consists of three parts:
    \begin{itemize}
        \item $M_1 = \{\gd_1, \gd_2, \cdots, \gd_{2k}\}$, where $\vt(\gd_j) = x_j + k^2T^2$. \\Let $T' =\frac12 \sum_{\gd\in M_1} \vt(\gd) =T + k^3T^2.$
        \item $M_2$ contains $k+1$ copies of $b$ with $\vt(b) = \frac{(k+2)T'}{(k+1)^2}.$
        \item $M_3$ contains five copies of $0$ with $\vt(0) = 0$. 
    \end{itemize}

     We state that the value of $b$ is smaller than any item in $M_1$. The proof of the lemma is at the end of this proof. 
    \begin{lembis}{lem:aefq}
        For any $g\in M_1$, $\vt(b) < \vt(\gd).$
    \end{lembis}

    If {\sc Equal-cardinality partition} is a YES instance, and $Y$ and $X\setminus Y$ are a solution, we show the following allocation $A$ is $\aefone$ and satisfies $Q$.
    \begin{enumerate}
        \item $A_1 = \{g_j\mid x_j\in Y\} \cup \{0,0\}$.
        \item $A_2 = \{g_j\mid x_j\in (X\setminus Y)\}\cup \{0,0\}$.
        \item $A_3 = M_2\cup \{0\}$. 
    \end{enumerate}
    It's not hard to verify that each agent gets exactly $k+2$ items, $\ut(A_1) = \ut(A_2) = \frac{T'}{k+2}$, and $\ut(A_3) = \frac{T'}{k+1}$. Agent 1 and 2 does not envy each other, and agent 3 does not envy agent 1 and 2. When comparing with agent 3, agent 1 and agent 2 can remove a 0 in their own bundle, and $\ut(A_1\setminus \{0\}) = \ut(A_2\setminus \{0\}) = \ut(A_3) = \frac{T'}{k+1}$. Therefore, such allocation is $\aefone$ and satisfies $Q$. 

    If $\aefoneqexst$ is a YES instance, and $A$ is an $\aefone $ allocation satisfying $Q$. We show that {\sc Equal-cardinality partition} is a YES instance in the following steps. 

    First, no agent can have more than two item $0$ in their bundles. Suppose this is not the case, and agent 1 gets at least three 0. Without loss of generality, assume agent 2 gets at most one 0. We show that agent 1 envies agent 2 even after removing one item. 
    \begin{itemize}
        \item If agent 1 removes an item in $A_1$, the best choice is to remove an $0$. Then the average value of $A_1\setminus \{0\}$ is at most $\frac{2T + (k-1)k^2T^2}{k+1}$. This is achieved when $A_1$ contains exactly three 0 and all $\gd_j$ such that $x_j > 0$ (Recall that $\vt(\gd) > \vt(b)$, so containing $b$ will decrease the average value). On the other hand, the average value of $A_2$ is at least $\frac{(k+1)\cdot \vt(b)}{k+2}$ for one item $0$ and $k+1$ item $b$. Then we have $\ut(A_1\setminus\{0\}) < \ut(A_2)$.
        % Then 
        % \begin{align*}
        %     \vt(A_1\setminus\{0\}) - \vt(A_2) \le& \frac{2T + (k-1)k^2T^2}{k+1} - \frac{(k+1)\cdot \vt(b)}{k+2}\\
        %     = & \frac{2T + (k-1)k^2T^2}{k+1} - \frac{(k+1)}{k+2}\cdot\frac{(k+2)(T + k^3T^2)}{(k+1)^2}\\
        %     = &\frac{2T + (k-1)k^2T^2}{k+1} - \frac{(T + k^3T^2)}{k+1}\\
        %     =&\frac{T - k^2T^2}{k+1}\\
        %     < & 0.
        % \end{align*}
        \item If agent 1 removes an item $\gd$ in $A_2$, the average value of $A_1$ is at most $\frac{2T + (k-1)k^2T^2}{k+2}$, and the value of remaining $A_2$ is at least $\frac{k\cdot \vt(b)}{k+1}$. Still, $\ut(A_1) < \ut(A_2\setminus \{\gd\})$.
        % Therefore, 
        % \begin{align*}
        %     \vt(A_1) - \vt(A_2\setminus\{\gd\}) \le& \frac{2T + (k-1)k^2T^2}{k+2} - \frac{k\cdot \vt(b)}{k+1}\\
        %     = &\frac{2T + (k-1)k^2T^2}{k+2} - \frac{k(k+2)(T + k^3T^2)}{(k+1)^3}\\
        %     = & \frac{2T}{k+2} - \frac{k(k+2)T}{(k+1)^3} + \frac{k^2T^2((k-1)(k+1)^3 - k^2(k+2)^2)}{(k+1)^3(k+2)}\\
        %     = & \frac{2T}{k+2} - \frac{k(k+2)T}{(k+1)^3} - \frac{k^2T^2(2k(k+1)^2 + 1)}{(k+1)^3(k+2)}\\
        %     <& \frac{2T}{k+2} - \frac{2k^3T^2}{(k+1)(k+2)}\\
        %     < & 0.
        % \end{align*}
        \end{itemize}
    In both cases, agent 1 envies agent 2, which is a contradiction. Therefore, no agents can have more than two item $0$. Then there are two agents with two item $0$ and one agent with one item $0$.  

    Second, the agent with exactly one item $0$ must have all the item $b$. Without loss of generality, let agent $3$ have exactly one 0. Then $A_3 = M_2\cup\{0\}$. 
    Suppose this is not the case, and $A_3$ contains at least one item from $M_1$. Since $\vt(\gd) > \vt(b)$, we have $\ut(A_3) > \frac{T'}{k+1}$, and at least one of $A_1$ and $A_2$ will have a value less than $\frac{T'}{k+2}$. Without loss of generality, assume $\ut(A_1) < \frac{T'}{k+2}$. We show that agent 1 envies agent 3 even after removing one item. 
    \begin{itemize}
        \item If agent 1 removes an item in $A_1$, the best choice is to remove an item $0$. Then $\ut(A_1\setminus\{0\}) = \frac{k+2}{k+1}\ut(A_1) < \frac{T'}{k+1} < \ut(A_3).$
        \item If agent 1 removes an item $\gd$ in $A_3$, then $\ut(A_3\setminus\{\gd\}) \ge \frac{k\cdot \vt(b)}{k+1} = \frac{k(k+2)T'}{(k+1)^3} = \frac{T'}{k+1} - \frac{T'}{(k+1)^3} > \frac{T'}{k+2}.$ Still $\ut(A_1) < \ut(A_3\setminus \{\gd\})$. 
    \end{itemize}
    Agent 1 envies agent 3 even after removing one item, which is a contradiction. Therefore, $A_3$ must contain all the item $b$. 

    Finally, agent 1 and agent 2's bundles must derive a partition of $X$. That is $A_1 = M_1'\cup\{0,0\}$, $A_2 = (M_1\setminus M_1')\cup \{0,0\}$, where $Y = \{x_j \mid \gd_j\in M_1'\}$ and $X\setminus Y$ is a partition of $X$. Suppose it is not the case, and $Y$ is not a partition of $X$. Without loss of generality, assume $\sum_{x_j\in Y} x_j < T$. Then we have $\ut(A_1) < \frac{T'}{k+2}$. With a similar reasoning to the second step, we can show that agent 1 envies agent 3 even after removing one item. Therefore, $Y$ must be a partitioning of $X$, and {\sc Equal-cardinality partition} is a YES instance.
\end{proof}

\begin{proof}[Proof of Lemma~\ref{lem:aefq}]
    Note that $\vt(\gd) \ge k^2T^2$ and $\vt(b) = \frac{(k+2)T'}{(k+1)^2} = \frac{(k+2)(T + k^3T^2)}{(k+1)^2} = $. Therefore,
    \begin{align*}
        \vt(\gd) - \vt(b) \ge& k^2T^2 - \frac{(k+2)(T + k^3T^2)}{(k+1)^2}\\
        = & (1 - \frac{k(k+2)}{(k+2)^2})k^2T^2 - \frac{(k+2)T}{(k+1)^2}\\
        =& \frac{k^2T^2}{(k+1)^2}- \frac{(k+2)T}{(k+1)^2}\\
        = & \frac{k^2T^2 - (k+2)T}{(k+1)^2}\\
        > & 0.
    \end{align*}
\end{proof}

\subsection{Theorem~\ref{thm:dp2}}
\label{apx:dp2_thm}
\begin{thmbis}{thm:dp2}
Given any instance of $\aefoneexst$ with quota $(\inst, Q)$, 
\begin{enumerate}
    \item if Algorithm~\ref{alg:dp2} returns NO, then $(\inst, Q)$ does not have an $\aefone$ allocation satisfying $Q$. 
    \item if Algorithm~\ref{alg:dp2} returns YES, it gives a $\aefonemul{(1 - \frac{4}{mn})}$ allocation satisfying $Q$.
\end{enumerate}  
\end{thmbis}

\begin{proof}
For the proof of the theorem, we first propose the following lemma.

\begin{lembis}{lem:dp2}
Given any \rim{} $\rmat$ and allocation $A$, and for any agent $i$, if $i$ does not envy any other agent by $\varepsilon > 0$ under $\rmat$ and $\ut^{\rmat}$, then $u^{\rmat}_i(A_i) \ge \frac1{n} u^{\rmat}_i(\ritem_i) -\varepsilon$.  
    % Given any \rim{} $\rmat$, if an allocation $A$ is $\aefoneplus{\varepsilon}$ under  $\rmat$ and $\ut^{\rmat}$, then for any agent $i$, $u^{\rmat}_i(A_i) \ge \frac1{n} u^{\rmat}_i(\ritem_i) -\varepsilon$. 
\end{lembis}

Lemma~\ref{lem:dp2} gives a lower bound for agents' valuation on their own bundle, which guarantees the approximation ratio. The proof of Lemma~\ref{lem:dp2} is at the end of the proof. 

For the NO case, we turn to prove that if $(\inst, Q)$ exists an $\aefone$ allocation satisfying $Q$, then Algorithm~\ref{alg:dp2} always returns YES. Suppose $A$ is an $\aefone$ allocation satisfying $Q$, and $\rmat$ is the corresponding \rim. Then for any agent $i\neq h$ and the corresponding removing item $\gd$, $\ut_i(A_i\setminus\{\gd\}) \ge \ut_i(A_h\setminus \{\gd\})$. 
After rounding to $\ut^{\rmat}$, the average value of $(A_i\setminus\{\gd\})$ will not decrease, and the average value of $(A_h\setminus\{\gd\})$ will increase no more than $\frac{\ub_i}{\rd}$ according to the rounding on $\vt_i$. Therefore,
\begin{equation*}
    \ut^{\rmat}_i(A_i\setminus\{\gd\}) -\ut_i^{\rmat}(A_h\setminus \{\gd\})\ge -\frac{\ub_i}{\rd}.
\end{equation*}
In $A$, agent $i$ will not envy another agent by more than $\frac{\ub_i}{\rd}$ after removing one item under$\ut^{\rmat}$. Therefore, the state corresponding to $A$ will be discovered when searching under $\rmat$. 

For the YES case, we show that if $A$ satisfies that any agent $i$ will not envy another agent by more than $\frac{\ub_i}{\rd}$ after removing one item under $\ut^{\rmat}$, then $A$ will be $\aefonemul{(1 - \frac{4}{mn})}$ under $\ut$. From Lemma~\ref{lem:dp2} we know that $\ut^{\rmat}_i(A_i) \ge \frac1{n} \ut^{\rmat}_i(\ritem_i) -\frac{\ub_i}{\rd}$. Suppose $i$ still envy $h$ even  after removing on item $\gd$ under $\vt$. (If this case does not exist, $A$ is an $\aefone$ allocation, and the statement holds).
We discuss cases where $\gd$ belongs to different bundles. 

\noindent{\bf Case 1:} $\gd \in A_h$. 
By the envy guarantee in the rounded valuation, we have 
\begin{align*}
    \ut_i(A_i) \ge \ut_i^{\rmat}(A_i) - \frac{\ub_i}{\rd}
    \ge  \ut_i^{\rmat}(A_h\setminus \{\gd\}) - \frac{2\ub_i}{\rd}
    \ge  \ut_i(A_h\setminus \{\gd\}) - \frac{2\ub_i}{\rd}
\end{align*}
Then by the envy in the original valuation, we have 
\begin{align*}
    \ut_i(A_h\setminus \{\gd\}) > \ut_i(A_i) \ge \ut_i^{\rmat}(A_i)- \frac{\ub_i}{\rd} \ge \frac1{n}\ut_i^{\rmat}(\ritem_i)- \frac{2\ub_i}{\rd}
\end{align*}

Now note that $\ritem_i$ contains at most $m$ items, of which the largest valuation is $\ub_i$. Therefore, $\ut_i^{\rmat}(\ritem_i) \ge \frac{\ub_i}{m}$. Therefore, 
\begin{align*}
    \ut_i(A_i) \ge & \ut_i(A_h\setminus \{\gd\}) - \frac{2\ub_i}{\rd}\\
    = & (1 - \frac{4}{mn})\ut_i(A_h\setminus \{\gd\}) + \frac{4}{mn}\cdot \ut_i(A_h\setminus \{\gd\}) - \frac{2\ub_i}{\rd}\\
    \ge & (1 - \frac{4}{mn})\ut_i(A_h\setminus \{\gd\}) + \frac{4}{mn}\cdot (\frac1{n}\ut_i^{\rmat}(\ritem_i)- \frac{2\ub_i}{\rd}) - \frac{2\ub_i}{\rd}\\
    \ge & (1 - \frac{4}{mn})\ut_i(A_h\setminus \{\gd\}) + \frac{4}{mn}\cdot \frac{a}{mn} - \frac{4\ub_i}{m^2n^2} \\
    \ge & (1 - \frac{4}{mn})\ut_i(A_h\setminus \{\gd\}). 
\end{align*}

\noindent{\bf Case 2:}  $\gd \in A_i$, and $\ut_i^{\rmat}(A_i\setminus\{\gd\}) \ge \ut_i^{\rmat}(A_i)$. With a similar reasoning, we have $\ut_i(A_i\setminus\{\gd\}) \ge \ut_i(A_h) - \frac{2\ub_i}{\rd}$.  And 
\begin{align*}
    \ut_i(A_h) > \ut_i(A_i\setminus\{\gd\}) \ge \ut_i(A_i) \ge \ut_i^{\rmat}(A_i)- \frac{\ub_i}{\rd} \ge \frac1{n}\ut_i^{\rmat}(\ritem_i)- \frac{2\ub_i}{\rd}
\end{align*}
Therefore, 
\begin{align*}
    \ut_i(A_i\setminus\{\gd\}) \ge &\ut_i(A_h) - \frac{2\ub_i}{\rd} \\
    \ge & (1 - \frac{4}{mn})\ut_i(A_h) + \frac{4}{mn}\cdot (\frac1{n}\ut_i^{\rmat}(\ritem_i)- \frac{2\ub_i}{\rd}) - \frac{2\ub_i}{\rd}\\
    \ge & (1 - \frac{4}{mn})\ut_i(A_h).
\end{align*}

\noindent{\bf Case 3:}  $\gd \in A_i$, and $\ut_i^{\rmat}(A_i\setminus\{\gd\}) < \ut_i^{\rmat}(A_i )$. In this case, we have 
\begin{align*}
    \ut_i(A_i) > \ut_i(A_i\setminus \{\gd\}) \ge \ut_i^{\rmat}(A_i\setminus \{\gd\}) - \frac{\ub_i}{\rd}
    \ge  \ut_i^{\rmat}(A_h) - \frac{2\ub_i}{\rd}
    \ge  \ut_i(A_h) - \frac{2\ub_i}{\rd}.
\end{align*}
Note that $\ut^{\rmat}(A_i) \ge \frac1{n}\ut^{\rmat}(\ritem_i) - \frac{\ub_i}{\rd}$, which implies $\ut(A_i) \ge \frac1{n}\ut^{\rmat}(\ritem_i) - \frac{2\ub_i}{\rd}$. Therefore, 
\begin{align*}
    \ut_i(A_i) > & (1 - \frac{4}{mn})(\ut_i(A_h) -\frac{2\ub_i}{\rd}) + \frac{4}{mn}\ut_i(A_i)\\
    \ge & (1 - \frac{4}{mn})(\ut_i(A_h) -\frac{2\ub_i}{\rd}) + \frac{4}{mn}(\frac1{n}\ut^{\rmat}(\ritem_i) - \frac{2\ub_i}{\rd})\\
    \ge & (1 - \frac{4}{mn})\ut_i(A_h) + \frac{4}{mn} \cdot \frac{\ub_i}{mn} - \frac{4\ub_i}{r}\\
    \ge & (1 - \frac{4}{mn})\ut_i(A_h).
\end{align*}

Therefore, we show that $A$ is an $\aefonemul{(1-\frac{4}{mn})}$ allocation under $\ut$. 

\end{proof}

\begin{proof}[Proof of Lemma~\ref{lem:dp2}]
    Suppose this is not true, and there exists an agent $i$ such that $\ut^{\rmat}_i(A_i) < \frac1{n} \ut^{\rmat}_i(\ritem_i) -\varepsilon$. In this case $\frac1{n}\ut^{\rmat}_i(\ritem_i) > \varepsilon$. 
    
    First we show that $\ut^{\rmat}_i(A_i\cap \ritem_i) < \ut^{\rmat}_i(\ritem_i)$. This is because $A_i$ contains at most $n-1$ items from $M^{\rmat}_i$ (one for every other agent). If $A_i\cap \ritem_i = 0$, then $\ut^{\rmat}_i(A_i\cap \ritem_i) = 0$. Otherwise, $\ut^{\rmat}_i(A_i\cap \ritem_i) \le n \ut^{\rmat}_i(A_i) < \ut^{\rmat}_i(\ritem_i)$. Then there must exist another agent $h\neq i$ that takes the share of $\ritem_i$ with the largest average value in $i$'s valuation, i.e. $\ut^{\rmat}_i(A_h \cap \ritem_i) \ge \ut^{\rmat}_i(\ritem_i)$. We show $i$ must envy $h$. Suppose the item $i$ removes when comparing with $h$ is $\gd$ (which is determined by $\rmat$). 

    \begin{enumerate}
        \item If $\gd\in A_h$, then $A_h\setminus\{\gd\}$ does not contain any item from  $M^{\rmat}_i$ and at least one item from $\ritem_i$. Therefore, \begin{equation*}
            \ut^{\rmat}_i(A_h\setminus\{\gd\}) = \ut^{\rmat}_i(A_h \cap \ritem_i) \ge \ut^{\rmat}_i(\ritem_i) > \ut^{\rmat}_i(A_i) + \varepsilon.
        \end{equation*}
        \item If $\gd\in A_i$, then $A_h$ contains at most $1$ items from $M^{\rmat}_i$ and at least one item from $\ritem_i$. Therefore, 
        \begin{equation*}
            \ut^{\rmat}_i(A_h) \ge \frac{1}{2} \ut^{\rmat}_i(A_h \cap \ritem_i) \ge \frac{1}{2}\ut^{\rmat}_i(\ritem_i)
        \end{equation*}
        
        For $i$, if $A_i$ contains exactly one item, Then 
        \begin{equation*}
            \ut^{\rmat}_i(A_i\setminus\{\gd\}) = 0 < \frac1{n}\ut^{\rmat}_i(\ritem_i) - \varepsilon \le \ut^{\rmat}_i(A_h) - \varepsilon.
        \end{equation*}
        
        Otherwise, $A_i$ contains at least two items. Therefore, 
        \begin{equation*}
            \ut^{\rmat}_i(A_i\setminus\{\gd\}) \le 2\ut^{\rmat}_i(A_i) < 2\left(\frac1{n} \ut^{\rmat}_i(\ritem_i) -\varepsilon\right) < \ut^{\rmat}_i(A_h) - \varepsilon.
        \end{equation*}
        \item If $\gd = \emptyset$ ($i$ does not remove any item), similar to the $\gd\in A_i$ case, we have 
        \begin{equation*}
            \ut^{\rmat}_i(A_i) < \frac1{n} \ut^{\rmat}_i(\ritem_i) -\varepsilon \le \ut^{\rmat}_i(A_h) - \varepsilon.
        \end{equation*}
    \end{enumerate}
    Therefore, $i$ envies $h$ more than $\varepsilon$ even after removing one item, which is a contradiction.
\end{proof}

\section{Approximation Algorithm}
\label{apx:dp2_alg}
\begin{algorithm*}[htbp]

\caption{Approximated DP for $\aefone$ with quota}
\label{alg:dp2}
\begin{algorithmic}[1]
\REQUIRE Agent set $N$, item set $M$, valuation profile $V$, and quota $Q$. 
\ENSURE $\aefonemul{(1-\frac{4}{mn})}$ allocation satisfying $Q$.
\FOR{all valid \rim{} $\rmat$}
\STATE For each agent $i$, let $M^{\rmat}_i$ be the items to remove by $i$ in comparisons (indicated by $\rmat$) and $\ritem_i \leftarrow M\setminus M^{\rmat}_i$.
\STATE Construct the new valuation $V^{\rmat}$ by rounding the valuations for items in $\ritem_i$ for each agent $i$.\\
$\rd \leftarrow m^2n^2$ and $\ub_i\leftarrow \max_{\gd\in \ritem_i} \vt_i(\gd)$ for all $i$.\\
For all agent $i$ and item $\gd\in M^{\rmat}_i$, $\vt^{\rmat}_i(\gd) \leftarrow \vt_i(\gd)$.\\
For all agent $i$ and item $\gd \in \ritem_i,$ $\vt^{\rmat}_i(\gd)$ is the rounded version of $\vt_i(\gd)$ with parameters $(\rd, \ub_i)$.
\STATE Allocate all items in $M^{\rmat}_i$ as $\rmat$ indicates.Let $\ritem'$ be the set of unallocated items. Let $\nvec^{\rmat}_0$ and $\mat^{\rmat}_0$ be the (initial) state after allocation.
\STATE Define the search space.
$\nvecset^{\rmat} = \{0,1,\cdots, m\}^{n}$.\\
$\matset^R_i = \{0, \frac{a_i}r, \frac{2a_i}r,\cdots, |\ritem_i|a_i\}^{n}$, and
$\matset^{\rmat} = \{H^{\rmat}_0 + H^{\rmat} | H^{\rmat} \in \matset^R_1 \times \matset^R_2\times\cdots \times \matset^R_n \}$
\STATE Initialization: $\isvld(\nvec^{\rmat}_0, \mat^{\rmat}_0, 0) \leftarrow 1$.
\FOR{$k=0,1,\cdots, |\ritem|-1$}
\FOR{$\nvec\in \nvecset^{\rmat} $ and $\mat\in\matset^{\rmat}$ such that $\isvld(\nvec, \mat, k) = 1$}
\FOR{$i = 1,2,\cdots n$}
\STATE Update $\nvec'$ and $\mat'$ after assigning the $k$-th item in $\ritem$ to agent $i$.
\STATE $\isvld(\nvec', \mat', k+1) \leftarrow 1$. $\prev(\nvec', \mat', k+1) \leftarrow i$.  
\ENDFOR
\ENDFOR
\ENDFOR
\FOR{$\nvec\in\nvecset^{\rmat}, \mat\in\matset^{\rmat}$ such that $\isvld(\nvec, \mat, |\ritem|) = 1$}
\IF{$(\nvec, \mat, |\ritem|)$ \\
(1) satisfies for each $i$, agent $i$ envies any other agent by at most $\frac{\ub_i}{\rd}$ after removing one item, \\
(2) satisfies the quota $Q$,}
\STATE Construct the allocation from $\prev$ backward. \textbf{return} YES and the allocation.
\ENDIF
\ENDFOR
\ENDFOR
\RETURN NO.
\end{algorithmic}
\end{algorithm*}}

\end{document}